\documentclass[a4paper, 10pt]{article}
\usepackage{amsmath}
\usepackage{amssymb}
\usepackage{amsthm}
\usepackage{verbatim}
\usepackage{stmaryrd}
\usepackage{mathtools}
\usepackage{graphics}
\long\def\beginpgfgraphicnamed#1#2\endpgfgraphicnamed{\includegraphics{#1}}
\usepackage{anysize}
\usepackage{url}
\usepackage[colorlinks]{hyperref}
\marginsize{3cm}{3cm}{2cm}{2cm}
\newcommand{\prob}{\mathop{\mathrm{Pr}}}
\newcommand{\pex}{\mathop{\mathbb{E}}}
\newcommand{\argmin}{\mathop{\mathrm{argmin}}}
\renewcommand{\leq}{\leqslant}
\renewcommand{\geq}{\geqslant}
\newcommand{\braket}[2]{\langle{#1}\vert{#2}\rangle}
\newcommand{\bra}[1]{\left\langle{#1}\right\vert}
\newcommand{\ket}[1]{\left\vert{#1}\right\rangle}
\newcommand{\weight}[1]{\lvert #1 \rvert}
\newcommand{\norm}[1]{\left| #1 \right|}
\newcommand{\snorm}[1]{\left\| \,#1\, \right\|}

\newcommand{\init}{\preccurlyeq}
\newcommand{\indef}[1]{\textbf{#1}}
\renewcommand{\phi}{\varphi}
\newtheorem{definition}{Definition}[section]
\newtheorem{lemma}[definition]{Lemma}
\newtheorem{theorem}[definition]{Theorem}
\bibliographystyle{amsalpha}
\title{The Role of Correlated Noise in Quantum Computing}
\author{Daan Staudt \\ ILLC, University of Amsterdam \\ dstaudt@science.uva.nl}

\begin{document}
\maketitle
\begin{abstract}
This paper aims to give an overview of the current state of fault-tolerant
quantum computing, by surveying a number of results in the field. We show
that thresholds can be obtained for a simple noise model as first proved in
\cite{AharonovFTQC, KitaevFTQC, KnillFTQC}, by presenting a proof for
statistically independent noise, following the presentation of Aliferis,
Gottesman and Preskill \cite{aliferis_2005}. We also present a result by
Terhal and Burkard \cite{terhal_2005} and later improved upon by Aliferis,
Gottesman and Preskill \cite{aliferis_2005} that shows a threshold can
still be obtained for local non-Markovian noise, where we allow the noise
to be weakly correlated in space and time. We then turn to negative results,
presenting work by Ben-Aroya and Ta-Shma \cite{ben-aroya_2009} who showed
conditional errors cannot be perfectly corrected. We end our survey by
briefly mentioning some more speculative objections, as put forth by Kalai
\cite{kalai_detrimental_2008, kalai_propagation, kalai_fail}.
\end{abstract}

%
%
\section{Introduction}
\label{intro}
We have come to take for granted that our modern (classical) computers can
perform complex computations for hours, days or weeks on end without failing.
We say that such implementations of classical computation are essentially
perfect. For the successful implementation of a \emph{quantum} computer,
however, we will have to guard against noise impacting our computation. This
paper discusses several results in the area of quantum error-correction and
fault-tolerant quantum computing (FTQC). We assume familiarity with the
basic principles of quantum computing (see, e.g., \cite{nielsen} or
\cite{qcnotes} for an introduction) as well as some knowledge of linear
algebra and discrete mathematics.

When dealing with classical computation, we often use the Turing Machine (TM)
model of computation. There is a quantum analogue to the TM called the Quantum
Turing Machine (QTM), but it is highly complex and not nearly as intuitive for
quantum computing as the TM is for classical computing. Instead of the QTM model,
it is standard when dealing with quantum computing to look at the \indef{quantum
circuit} model. As with classical (Boolean) circuits, a quantum circuit is built
up from a variety of gates. Instead of logical (Boolean) gates such as AND or OR,
quantum circuits contain quantum gates; unitary operations on a fixed number
of qubits, usually 1, 2 or 3.

These gates can be executed sequentially, corresponding to the ordinary product
of the unitaries. They can also be executed in parallel, on different input
qubits, corresponding to the tensor product of the unitaries. Naturally there
are infinitely many different quantum gates. We usually assume that we make use
of only a finite number of gates which suffice to approximate any unitary
gate. It is not important to our discussions exactly which so called
\indef{universal set of gates} is used, as long as the set is finite. Common
examples of universal sets of gates include the set containing the Hadamard
gate ($H$) and the Toffoli gate ($T$) or the set containing the Hadamard gate,
the CNOT gate and the $\pi/8$ gate, $\big(\begin{smallmatrix} 1 & 0 \\ 0 & e^{i
\pi / 4} \end{smallmatrix}\big)$. The longest time required to execute any of
the gates in the universal set is called the \indef{fundamental gate time}.

The input to a circuit is conceptually divided into a number of
\indef{registers}. We sometimes need to make use of temporary qubits, prepared
in a fixed known state (usually $\ket{0}$), that we dispose of immediately
after use. Such temporary qubits that are not considered part of any input
register are called \indef{ancilla qubits}. The quantum circuit we wish to
implement may contain a number of measurements of one or more qubits. However,
we may and will assume that those measurements are postponed until the very
end of the circuit, see, e.g., \cite[Section 4.4]{nielsen}.

The noise we have to protect our quantum computations from can be viewed in
two ways. First we consider noise as impacting qubits while they are not being
acted upon, the so-called \indef{storage errors}, or \indef{errors} for short. We
then look at noise as something that impacts the performance of our quantum
gates. Such noise is said to introduce \indef{faults} in our gates. For a
description of the various types of noise that can impact a quantum computation
and their physical motivations, see, e.g., \cite[Sections 1.2.2.1,
1.2.3.1]{aliferis_2011}. The term \indef{fault-tolerant quantum computing}
(FTQC) is used to describe quantum computers that are capable of
dealing with faults and errors without yielding incorrect answers. This is
quite different from the way we consider \emph{classical} computation, which we
called essentially perfect. A computer being perfect means that it is
insusceptible to noise, rather than being capable of mitigating its harmful
influences.

Let us first consider storage errors in the classical case. When we limit
ourselves to a single bit, the damage any noise can do is quite
restricted. Either the bit is left intact or it is flipped, i.e., 0 becomes 1 or
1 becomes 0. When we consider a string of bits, however, more variation is
possible. Any combination of bits in the string could be flipped. There could
be some imaginary adversary that decides which bits to flip, but for now we
will limit ourselves to a simpler noise model. In this model each bit of a
binary string $x$ is flipped with some probability $p$ independent of the
others. We refer to this model as \indef{independent noise} or a bit-flip
channel. Suppose this noise were to impact our bitstring $x$ of length $n$.
Then with probability $(1-p)^n$ the string is left intact and with probability
$1 - (1-p)^n$ at least one bit is flipped. Without error correction we have no
way to detect which bit or bits have been flipped and so with probability $1 -
(1-p)^n$ we cannot recover the intended state of the string.

The solution is to encode the intended state, adding redundant information. That
way we can tolerate a portion of the information being lost while still being
able to recover the original string. Fortunately in the classical world we can
clone information, so we can, for example, copy each bit in our string several
times in the hope that few of the bits will be damaged. We can then decide
what the original bit was by taking the majority value of our copies. More
formally we define an \indef{encoder} $C: \{0,1\} \to \{0,1\}^3$ that encodes
one bit into three bits by $C(0) = 000$ and $C(1) = 111$. We refer to bits we
wish to encode (0 and 1) as \indef{logical bits} and to the bits that are
stored or transmitted ($C(0)$ and $C(1)$) as \indef{physical bits}. To recover
the logical bits from the physical bits we need a \indef{decoder} $D: \{0,1\}^3
\to \{0,1\}$ which will output the majority value of its input. Define the
\indef{Hamming weight} of $x \in \{0, 1\}^n$ as $\norm{x} = \norm{\{x_i \mid 1
\leq i \leq n \text{ and } x_i = 1 \}}$, i.e., the number of 1s in $x$. Now
for $x \in \{0,1\}^3$ we define $D(x) = 0$ if $\norm{x} \leq 1$ and $D(x) =
1$ if $\norm{x} \geq 2$.

Now let us analyze what happens when $C(0)$ or $C(1)$ are exposed to the
independent noise. So long as at most one bit of $C(0)$ or $C(1)$ is flipped,
our decoder $D$ will return us to the intended state. The intended state
becomes unrecoverable when at least 2 bits are flipped. Because each bit is
flipped with probability $p$ independent of the others, the probability that
at least 2 bits are flipped is $3(1-p)p^2 + p^3$, i.e., the probability that 2
or 3 bits are flipped. This by itself does not mean that we have increased the
probability of recovering our string. For that to have happened it must be that
$3p^2 - 2p^2 < p$. This will certainly be the case if $3p^2 < p$, which happens
when $p < 1/3$. We refer to $1/3$ as a \indef{threshold value}. Once we can get
the noise rate below it we are certain we can encode information to increase the
chance of recovering from the noise. In fact we can bring the probability of
recovering the intended state arbitrarily close to 1 by repeating the encoding
procedure an arbitrary number of times: we can consider each of the bits of
$C(0)$ and $C(1)$ as logical bits themselves and encode them using $C$. This
process is called \indef{concatenation}.

In the quantum case even a single qubit can be exposed to a \emph{continuum}
of different errors. This is easy to see when we write a qubit $\ket{\phi}$ as
$\alpha \ket{0} + \beta \ket{1}$. So long as $\norm{\alpha}^2 + \norm{\beta}^2
= 1$ we still have a qubit and there are uncountably many pairs $(\alpha,
\beta)$ that satisfy the equation. For now we will assume that noise acting on
a single qubit is some \emph{unitary operator}. It is common to write such
unitaries in the Pauli basis, given by
\begin{equation*}
I = \begin{pmatrix}
1 & 0\\
0 & 1\\
\end{pmatrix}, \quad X = \begin{pmatrix}
0 & 1\\
1 & 0\\
\end{pmatrix}, \quad Y = \begin{pmatrix}
0 & -i\\
i & 0\\
\end{pmatrix}\text{ and } Z = \begin{pmatrix}
1 & 0\\
0 & -1\\
\end{pmatrix}.
\end{equation*}
Every 2-by-2 matrix can be written as a linear combination of these
four matrices. In particular we refer to unitary errors of this form as
\indef{Pauli errors}. We note that $Y = i X Z$, so apart from a global phase of
$i$ we can act as though $Y = X Z$.

Just as before we must encode our quantum states if we wish to guard against
noise. We formalize the concept of encoding quantum states as follows.
\begin{definition}
Given a state space $\mathcal{N}$ with $\dim(\mathcal{N}) = 2^n$, we call a
space $\mathcal{M} \subseteq \mathcal{N}$ an $\llbracket n,k \rrbracket$
Quantum Error Correcting Code (QECC) encoding $k$ qubits into $n$ qubits if
$\dim(\mathcal{M}) = 2^k$. Associated with each QECC is a map from $k$-qubit
states $\ket{x}$ to their encoded $n$-qubit states $\ket{\overline{x}}$.
\end{definition}
We call a vector in $\mathcal{M}$, i.e., an encoded quantum state, a
\indef{codeword} and we refer to $\mathcal{M}$ as a \indef{QECC} or as the
\indef{code space}. As an example, we will analyze what happens when a qubit
that is encoded using a $\llbracket 9,1 \rrbracket$ QECC called the Shor code
is exposed to a unitary storage error. The Shor code encodes $\ket{0}$ as
$\frac{1}{\sqrt{8}}(\ket{000} + \ket{111})^{\otimes 3}$ and $\ket{1}$ as
$\frac{1}{\sqrt{8}}(\ket{000} - \ket{111})^{\otimes 3}$. For $b \in \{0,1\}$
we will denote the encoding of $\ket{b}$ by $\ket{\overline{b}}$.

Let us consider what happens when an $X$ error hits one of the qubits of
$\ket{\overline{b}}$. It can hit any of the nine qubits, so for $1 \leq k
\leq 9$, let $X^k$ be the operator that applies $X$ to the $k$'th qubit of
$\ket{\overline{b}}$ and $I$ to the others. So if an $X$ error hits $\ket{
\overline{b}}$, we are left with $X^k\ket{\overline{b}}$ for some $k$. We can
detect which of the nine qubits was subjected to the $X$, i.e., we can
determine $k$. This can be done without collapsing the state, see, e.g.,
\cite[Chapter 10]{nielsen} for details. We can store the location $k$ of
the affected qubit using four ancilla qubits. By convention if no $X$ error
has occurred we let $k = 0$ and we define $X^0$ as the operator that applies $I$
to all qubits in the state. If $\ket{\overline{b}}$ is struck by a $Z$ error,
then one of the three blocks $(\ket{000} \pm \ket{111})$ will have a different
sign than the other two. Note that it does not matter which of the qubits in
the block was hit by the error, as the effect is the same. So we can for $1
\leq \ell \leq 3$ define $Z^\ell$ to be the operator that applies $Z$ to some
qubit in the $\ell$'th bock of $\ket{\overline{b}}$ and $I$ to the others. So
if a $Z$ error hits we are left with $Z^\ell \ket{\overline{b}}$ for some
$\ell$. As with $X$ errors we can detect which block was hit by a $Z$ error,
i.e., we can determine $\ell$. This information can be stored using two more
ancilla qubits. As with $X$, we let $\ell = 0$ if no $Z$ error has occurred
and define $Z^0$ to be the operator that applies $I$ to all the qubits in the
state.

So if our state $\ket{\overline{b}}$ is hit by an $X$ or $Z$ error, i.e., has
turned into $X^k Z^\ell \ket{\overline{b}}$ for some $k \in \{0, \ldots 9\}$ and
$\ell \in \{0, \ldots, 3\}$ we can detect these errors and write $k$ and $\ell$
into ancilla qubits to obtain $X^k Z^\ell \ket{\overline{b}}\ket{k}\ket{\ell}$.
This procedure is called \indef{error detection} and we refer to the
pair $(k,\ell)$ as the \indef{error syndrome}. We now measure the ancilla qubits
to obtain the error syndrome. We correct the errors by applying another $X^k$
to the state and applying a $Z$ to some qubit in the $\ell$'th block, say the
first. This is called \indef{error correction}. As discussed before, a $Y$
error hitting a qubit is the same, modulo a global phase of $i$, as both an $X$
and a $Z$ error hitting that qubit. So if we define $Y^k$ as the operator that
applies $Y$ to the $k$'th qubit and $I$ to the others we can say that $Y^k \ket{
\overline{b}} = i X^k Z^\ell \ket{\overline{b}}$, where $\ell$ is the block
containing the $k$'th qubit. We can perform the error detection and error
correction steps to obtain, after discarding the ancilla qubits, the state
$i\ket{\overline{b}}$. Note that measuring this state gives the same probability
distribution as measuring $\ket{\overline{b}}$, so we are safe to ignore this
global phase and say that we can also correct $Y$ errors. Also note that we can
trivially correct an $I$ error, as this leaves the state intact. When we
perform the error-detection step, the error syndrome will be $(0,0)$ and
performing $X^0$ and $Z^0$ has no effect, so the state remains correct. After
the error-detection and error-correction steps the Shor code discards the ancilla
qubits that held the error syndrome. This is a way to remove from the system the
entropy introduced by the errors. As such the Shor code requires a constant fresh
supply of properly prepared ancilla qubits. In fact such a constant fresh supply
of ancillas is a prerequisite for implementing \emph{any} QECC.

Furthermore, since any 2-by-2 matrix $M$ can be written as a linear combination
of the Pauli matrices, i.e., $M = \alpha_I I + \alpha_X X + \alpha_Y Y +
\alpha_Z Z$, we can correct any error hitting a single qubit. To see this note
that when the $k$'th qubit of a state $\ket{\overline{b}}$ is subjected to $M$
we obtain after the error-detection step the following state, where $\ell$ is the
block containing the $k$'th qubit.
\begin{equation*}
\alpha_I \ket{\overline{b}}\ket{0}\ket{0} + \alpha_X X^k\ket{\overline{b}}
\ket{k}\ket{0} + \alpha_Y X^k Z^\ell \ket{\overline{b}}\ket{k}\ket{\ell} +
\alpha_Z Z^\ell \ket{\overline{b}}\ket{0}\ket{\ell}
\end{equation*}
Note that we ignored the global phase of $i$ for the $Y$ error. Measuring the
ancilla qubits, this state collapses to one of the four terms. Each of those we
can correct to recover $\ket{\overline{b}}$.

Faults can arise as the result of an imperfect implementation of a
quantum gate. They can also be the result of an imperfect preparation of a
register qubit. The input to a quantum circuit is given classically and the
quantum computer must encode this classical state into the register qubits
before the circuit can be executed. Errors may also arise as the result of
imperfect measurements. We call a gate that performs a different operation than
intended a \indef{faulty gate}. Noise is to blame for gates being faulty, so we
say that noise introduces \indef{faults} in gates. As hinted to above, we can
use QECCs to protect gates from faults induced by noise. In particular we can
prevent gates from spreading errors in their input to errors in their output
too much. Such implementations of gates are called \indef{fault-tolerant}
(implementations of) gates. 

The way circuits are constructed, with both parallel and sequential executions
of gates, it is very common for some qubits or even entire registers to remain
\indef{resting} for large portions of the circuit's execution. Resting in this
case means that there are no gates acting on them. The longer a qubit is
resting, the more likely it is to be hit by storage errors. We can however
consider such qubits to be acted upon by identity gates ($I$) and we can create
fault-tolerant implementations of identity gates to guard against storage
errors. This shows how we can think of errors as faults. We can also think of
faults as (storage) errors by thinking of a faulty gate as an ideal gate
followed by some (not necessarily unitary) error operator. That error operator
can then be seen as causing a storage error. This illustrates how we can think
of noise as something that causes errors or as something that causes faults
(or both), whichever way of thinking is more convenient for us at any given
time. 

The simplest noise models in terms of analysis are independent noise models,
where each qubit is hit by an error independent of the others. It is generally
believed, however, that such noise models are not physically realistic. It is
assumed that in physically realistic models the noise will be correlated, either
in time, in space or in both. It may also be possible that the noise does not
act the same on each term of a state in superposition, something that we do
assume with the independent noise model. This paper aims to give an overview
of the current state of FTQC, of noise models for which we have threshold
results as well as types of noise for which no threshold can be obtained.

This paper is structured as follows. We start with some positive results, i.e.,
showing that FTQC is possible provided the noise levels are low enough. In
Section \ref{frame} we show this for an independent noise model using the
framework and method presented in \cite{aliferis_2005}. In Section
\ref{nonmarkov} we show it for a ``local non-Markovian'' noise model, as first
done by Terhal and Burkard in \cite{terhal_2005} and later improved upon by
Aliferis, Gottesman and Preskill in \cite{aliferis_2005}. Then in Section
\ref{object} we turn to some negative results by Ben-Aroya and Ta-Shma,
who showed in \cite{ben-aroya_2009} that certain types of errors cannot be
corrected by any QECC, although some errors can be \emph{approximately}
corrected. In Section \ref{longtail} we turn more speculative objections to
FTQC as put forth by Kalai in \cite{kalai_detrimental_2008, kalai_propagation, 
kalai_fail}. Finally we conclude in Section \ref{conclusion}.

%
%
\section{A threshold result for independent noise}
\label{frame}
In this section we will describe a general framework for fault-tolerant quantum
computation in the face of independent noise. This result was first proved in
\cite{AharonovFTQC, KitaevFTQC, KnillFTQC}, but here we will follow the
presentation of \cite{aliferis_2005}. The next section will deal with a more
challenging noise model.

The goal is to create (and prove correct) a fault-tolerant implementation of an
arbitrary quantum circuit. That circuit we shall refer to as the \indef{ideal
circuit} and denote by $M_0$. We start by dividing $M_0$ into a set of
\indef{locations}, each corresponding to a single gate, qubit preparation or
measurement in the circuit. Note that we consider a resting qubit (i.e., one
that is not being acted upon by a gate) to be acted upon by the identity gate.
Thus each time interval where a qubit is resting is divided into a number of
locations corresponding to identity gates. Note that we may treat each location
that corresponds to the application of a gate as corresponding to a time
interval of length $t_0$, the fundamental gate time.

Now the QECC comes into play. Let $C$ be a QECC that encodes one (logical)
qubit into $m$ (physical) qubits. We refer to a set of $m$ qubits that are
the encoding of a single qubit by $C$ as a \indef{1-block}. We will encode (here
the term is used informally) each location into a group of locations called a
\indef{rectangle}. There will be rectangles for preparing register qubits, qubit
measurements and the application of gates. When all the locations in $M_0$ are
replaced by rectangles we obtain a new circuit which we shall call $M_1$.

In our ideal circuit $M_0$ we will prepare register qubits in some basis, say
the computational basis. Where in $M_0$ we would simply be supplied with a
$\ket{0}$ qubit, in $M_1$ we will need an encoded $\ket{0}$, $C(\ket{0})$. A
\indef{qubit preparation rectangle} is thus a rectangle that provides us with
$C(\ket{0})$. Furthermore we assume that the rectangle contains circuitry for
performing the error-detection and error-correction steps as described in
Section \ref{intro} after the $C(\ket{0})$. We refer to such error-detecting
and error-correcting circuitry that corrects errors in a 1-block as a
\indef{1-EC}, for Error Correction. Note that we will not replace the
preparation of the ancilla qubits used to hold the error syndrome by the QECC,
only the preparation of register qubits.

Similarly we must be able to measure the logical value of a 1-block in $M_1$,
in other words we must be able to decode a 1-block to obtain the measurement
outcome had we measured in $M_0$. For this operation we use \indef{qubit
measurement rectangles}. Depending on the QECC used and the rectangle design
this might be as simple as measuring each qubit in the 1-block and then taking
the (recursive) majority. Since the measurement outcomes are classical and we
can classically derive the logical output from the measurements there is no
need for a 1-EC in the measurement rectangles. Naturally we assume that
classical information storage and computation is perfect.

Each gate in a circuit is replaced by a \indef{gate application rectangle},
which consists of a fault-tolerant implementation of the gate, called a
\indef{1-Ga} for Gate, followed by a 1-EC. Depending on the code used we may
have to require that the gates of the ideal circuit $M_0$ are gates in a
particular universal set of gates. In this discussion however we will not fix
such a set. We call a group of locations in $M_1$ that are the encoding of a
single location in $M_0$, i.e., that make up the rectangle for that location
in $M_0$, a \indef{1-Rectangle} or \indef{1-Rec} for short.

The procedure of encoding a single (logical) qubit into a rectangle consisting
of (physical) qubits can be repeated many times, by considering the qubits that
make up the rectangles as logical qubits themselves and replacing each by a
rectangle as before. Thus we can have that each location in $M_0$ is encoded by
a rectangle consisting of locations in $M_1$ that are each encoded by a
rectangle consisting of locations in $M_2$ and so on. To reason about such
recursive encodings we shall extend our definitions somewhat.

A set of qubits in $M_r$ that are the (concatenated) encoding of a single qubit
in $M_{r-s}$ is called an \indef{$s$-block} in $M_r$. We have already seen a
1-block in $M_1$, which corresponded to a single qubit in $M_{1-1} = M_0$ and
by the nature of $C$ thus consisted of $m$ qubits. At this level we can say the
1-block contains physical qubit that encode a single logical qubit of $M_0$.
Similarly a 1-block in $M_2$ will be the encoding of a single qubit in $M_{2-1}
= M_1$ and will also be $m$ qubits. From this perspective we can consider the
1-block to consist of the physical qubits that encode a single logical qubit
of $M_1$. A 2-block in $M_2$ however will be the encoding of a single qubit
in $M_{2-2} = M_0$, which corresponds to $m$ qubits in $M_1$, each of which
becomes $m$ qubits in $M_2$, thus the size of a 2-block in $M_2$ is $m^2$
qubits. In general an $r$-block in $M_r$ consists of $m^r$ qubits. Here we
can say that the $m^2$ qubits of $M_2$ are the physical qubits for $m$ logical
qubits in $M_1$, which are themselves physical qubits for a single logical
qubit in $M_0$. So what we call physical or logical qubits in $M_k$ depends on
whether we look `down' to $M_{k+1}$ or `up' to $M_{k-1}$.

Similarly we call a group of locations in $M_r$ that are the (concatenated)
encoding of a single location in $M_{r-s}$ an \indef{$s$-Rec} in $M_r$. We have
seen that a 1-Rec in $M_1$ is a rectangle as it corresponds to a single
location in $M_{1-1} = M_0$. Similarly a 1-Rec in $M_2$ would correspond to a
rectangle for a location in $M_1$ and a 2-Rec in $M_2$ corresponds to the set
of rectangles for the locations in $M_1$ that make up a rectangle for a single
location in $M_0$. In general it may be difficult to calculate the precise
number of locations in an $r$-Rec in $M_r$, but if we let $L$ be the maximum
number of locations in a rectangle we can give an upper bound as $L^r$. We
could also generalize our definitions of a 1-EC and a 1-Ga, but we will not
often need to refer to $s$-ECs or $s$-Gas in this discussion. Figure \ref{fig:mk}
illustrates some of these key definitions.

\begin{figure}[htbp]
\begin{centering}
\beginpgfgraphicnamed{simlevels}
\newcommand{\rec}[3]{
  \begin{scope}[xshift=#1cm,yshift=#2cm,scale=#3]
    \draw[dashed] (0,0) rectangle (3,2);
    \coordinate (p1) at (0,0);
    \coordinate (p3) at (3,2); 

    \draw (1.5,0.5) rectangle (2.5,1.3);
    \draw (1.3,0.7) -- (1.5,0.7);
    \draw (1.3,1.1) -- (1.5,1.1);
    \draw (2.5,0.7) -- (2.7,0.7);
    \draw (2.5,1.1) -- (2.7,1.1);

    \draw (0.4,1.3) rectangle (0.8,1.7);
    \draw (0.2,1.5) -- (0.4,1.5);
    \draw (0.8,1.5) -- (1.0,1.5);
    \coordinate (q1) at (0.4,1.3);
    \coordinate (q3) at (0.8,1.7);

    \draw (0.6,0.3) rectangle (1.0,0.7);
    \draw (0.4,0.5) -- (0.6,0.5);
    \draw (1.0,0.5) -- (1.2,0.5);
  \end{scope}
}
\begin{tikzpicture}[yslant=0.2,xslant=0,scale=0.65]
  \draw[black,thick] (6,6) rectangle (13,11);
  \rec{7}{8}{1}
  \coordinate (p1k) at (p1);
  \coordinate (p3k) at (p3);
  \coordinate (q1k) at (q1);
  \coordinate (q3k) at (q3);

  \draw[black,thick] (14,0) rectangle (21,5);
  \draw[dashed] (15.3,0.7) rectangle (19,3);
  \coordinate (s1k1) at (15.3,0.7);
  \coordinate (s3k1) at (19,3);
  \rec{16}{1}{0.7}
  \coordinate (p1k1) at (p1);
  \coordinate (p3k1) at (p3);
  \coordinate (q3k1) at (q3);

  \draw[dashed] (p1k) -- (s1k1);
  \draw[dashed] (p3k) -- (s3k1);
  \draw[dashed] (q1k) -- (p1k1);
  \draw[dashed] (q3k) -- (p3k1);

  \fill[white,fill opacity=0.7] (14,0) rectangle (21,5);
  \draw[black,thick] (14,0) rectangle (21,5);
  \draw[dashed] (15.3,0.7) rectangle (19,3);
  \rec{16}{1}{0.7}

  \draw[-latex] (15,11) node[right]{$M_k$}
         to[out=180,in=45] (13,11);
  \draw[-latex] (15,10) node[right]{$k$-Rec in $M_k$}
         to[out=180,in=45] (p3k);
  \draw[-latex] (15,9) node[right]{Logical gate in $M_k$}
         to[out=195,in=15] (q3k);
  \draw[-latex] (23,5) node[right]{$M_{k+1}$}
         to[out=180,in=45] (21,5);
  \draw[-latex] (23,4) node[right]{$(k+1)$-Rec in $M_{k+1}$}
         to[out=195,in=45] (19,3);
  \draw[-latex] (23,3) node[right]{1-Rec in $M_{k+1}$}
         to[out=205,in=30] (p3k1);
  \draw[-latex] (23,2) node[right]{Physical gate in $M_{k+1}$}
         to[out=200,in=5] (q3k1);
\end{tikzpicture}
\endpgfgraphicnamed
\end{centering}
\caption{The relation between the level-$k$ encoding of a circuit and the
level-$(k+1)$ encoding of the same circuit. Note that a single `logical' gate
in $M_k$ is encoded by a 1-Rec in $M_{k+1}$. This `logical' gate takes a single
qubit as input, so the corresponding 1-Rec takes a 1-block as input. The gates
that make up each 1-Rec in $M_{k+1}$ are called `physical' gates.}
\label{fig:mk}
\end{figure}

Since each 1-Rec in a circuit ends in a 1-EC, with the exception of measurement
rectangles which we assume to only occur at the very end of the circuit, each
1-Rec is also immediately preceded by a 1-EC. We call a 1-Rec together with
the 1-EC that immediately precedes it a \indef{1-exRec} for `extended
rectangle'. Similarly we can define an \indef{$s$-exRec} in $M_r$ as an $s$-Rec
in $M_r$ with its preceding $s$-EC.

We are assuming that $C$ is a QECC that can correct a single unitary error,
which means we can construct the 1-ECs and 1-Gas in 1-Recs such that the
following conditions are met:
\label{cond15}
\begin{enumerate}
\item If a 1-EC contains at most one fault, then it takes any pure state input
to an output in the code space.
\item If a 1-EC does not contain a fault, then it takes any pure state input
with at most one error to an output with no errors.
\item If a 1-EC contains at most one fault, then it takes a pure state input
with no errors to an output with at most one error.
\item If a 1-Ga contains no fault, then it takes a pure state input with at
most one error to an output with at most one error in each output block.
\item If a 1-Ga contains at most one fault, then it takes a pure state input
with no errors to an output with at most one error in each output block.
\end{enumerate}
See \cite{aliferis_2011} for a general discussion of rectangle design and
\cite[Sections 7 and 8]{aliferis_2005} for an explicit construction of
rectangles that satisfy these conditions.

We say that a 1-exRec is \indef{good} if it is hit by at most one fault and that
it is \indef{bad} if it is hit by at least two. The idea is that a good 1-exRec
will leave at most one error in its output. We call two bad 1-exRecs
\indef{independent} if they do not overlap, i.e., do not share a 1-EC, or if
they do overlap and the first 1-exRec would still contain at least two fault
if we do not count the faults in the shared 1-EC. We define goodness and
badness for higher levels of concatenation recursively. A $k$-exRec is good
if it contains at most one bad $(k-1)$-exRec and bad if it contains at least
two. Analogously to the 1-exRec case we call two bad $k$-exRecs independent
if they do not overlap, i.e., do not share a $k$-EC, or if they do overlap and
the first $k$-exRec would still contain at least two bad $(k-1)$-exRecs if
we do not count the $(k-1)$-exRecs in the shared $k$-EC.

Our strategy for proving the threshold result consists of three stages. First
we show that if a $k$-exRec is good then the $k$-Rec it contains will be
`correct'. Secondly we show that if all $k$-exRecs in $M_k$ are good, the
probability distribution of a measurement of $M_k$ will be the same as that of
$M_0$. Finally we show that the number of bad $k$-exRecs decreases doubly
exponentially as the level of encoding ($k$) increases polynomially. We
conclude with the threshold result.

For the first step in our proof we need to define what it means for a $k$-Rec
to be correct. To this end we introduce the concept of an ideal $k$-decoder in
$M_k$, which we define recursively. An \indef{ideal 1-decoder} in $M_1$ takes
a 1-block as input, performs the error-detection and error-correction steps,
i.e., a 1-EC, and outputs a single decoded qubit.  An ideal \indef{$k$-decoder}
in $M_k$ takes a $k$-block as input and first runs ideal $(k-1)$-decoders on
each of the $(k-1)$-blocks of its input and then uses an ideal 1-decoder on the
resulting 1-block. The decoder is called ideal because we assume that it
contains no faults, hence it is only a theoretical device. Note that these
decoders are not part of the actual fault-tolerant circuit, they are only used
for the analysis.

We can now say that a $k$-Rec for the application of a gate is \indef{correct}
if the $k$-Rec followed by an ideal $k$-decoder is equivalent to the ideal
$k$-decoder followed by the ideal gate it is meant to implement. A $k$-Rec
for qubit preparation is called correct if the $k$-Rec followed by the ideal
$k$-decoder is equivalent to the qubit preparation the $k$-Rec is meant to
implement. Finally a $k$-Rec for qubit measurement is correct if the $k$-Rec
is equivalent to the ideal $k$-decoder followed by the measurement the $k$-Rec
is meant to implement. Thus we can see that a correct $k$-Rec allows its
output state to be successfully decoded by some ideal decoder. We are now
ready to prove our first lemma.

\begin{lemma}[{\cite[Lemma 3]{aliferis_2005}}]
\label{ali3}
Assume conditions 1-5. For $k \geq 1$, if a $k$-exRec is good then the $k$-Rec
it contains is correct.
\end{lemma}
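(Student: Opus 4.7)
The plan is to prove the lemma by induction on $k$. For the base case $k=1$, I would perform an exhaustive case analysis on the location of the at-most-one fault in the good 1-exRec, chaining conditions 1-5 to trace how errors propagate through the preceding 1-EC, the 1-Ga, and the trailing 1-EC. For the inductive step, the induction hypothesis lets me treat each good $(k-1)$-exRec inside a $k$-exRec as effectively implementing its intended ideal gate at the decoded logical level, so that the level-$k$ argument reduces to a structural copy of the base-case argument applied one level up.

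Concretely, for the base case: if the fault lies in the preceding 1-EC, then conditions 1 and 3 show that the input to the otherwise fault-free 1-Rec lies in the code space with at most one error; condition 4 preserves this as at most one error per output block of the 1-Ga, and condition 2 removes it in the fault-free trailing 1-EC. If the fault is in the 1-Ga, the fault-free preceding 1-EC outputs a clean codeword by conditions 1 and 2, condition 5 bounds the 1-Ga output to at most one error per block, and condition 2 cleans it up. If the fault is in the trailing 1-EC, conditions 1, 2, and 4 give it a clean codeword input, and conditions 1 and 3 bound its output to a codeword with at most one error, which the ideal decoder then corrects. The no-fault case is immediate. In every case the ideal 1-decoder applied after the 1-Rec reproduces the same logical state as the ideal gate applied after ideally decoding the input, which is precisely the correctness condition.

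For the inductive step, the induction hypothesis tells me that every good $(k-1)$-exRec in the $k$-exRec has a correct $(k-1)$-Rec, and therefore commutes past an ideal $(k-1)$-decoder up to the intended ideal gate. Since the ideal $k$-decoder factors as an ideal $(k-1)$-decoder on each $(k-1)$-block followed by an ideal 1-decoder, I would slide the ideal $(k-1)$-decoders rightward through all the good $(k-1)$-exRecs; the $k$-exRec then collapses to a structure isomorphic to a 1-exRec at the outer level, in which the at-most-one bad $(k-1)$-exRec plays exactly the role that a single physical fault plays in the base case, and the base-case case analysis re-applies. Qubit preparation and measurement rectangles are handled by the obvious analogues with \emph{ideal gate} replaced by the intended ideal operation. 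The main obstacle will be rigorously justifying this sliding step, in particular showing that a correct $(k-1)$-Rec commutes through ideal $(k-1)$-decoders on both its input and output blocks, and that the resulting outer-level structure genuinely satisfies analogues of conditions 1-5 once good $(k-1)$-exRecs are identified with ideal gates, so that the base-case reasoning transports cleanly to level $k$.
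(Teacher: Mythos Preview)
Your proposal is correct and follows essentially the same approach as the paper: induction on $k$ with the same base-case case analysis chaining conditions 1--5, and an inductive step that factors the ideal $k$-decoder as inner decoders followed by a 1-decoder, slides the inner decoders back through the good inner exRecs via the induction hypothesis, and reduces to the base case. You are if anything more explicit than the paper about the single bad $(k-1)$-exRec playing the role of the lone physical fault in the reduced 1-exRec (the paper's inductive step glosses over this); note only that the decoders are slid \emph{leftward}, from the output side back toward the input, not rightward as you wrote.
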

\begin{proof}
We prove this by induction on $k$. For $k = 1$ we first consider a 1-exRec
for a gate application. Because the 1-exRec is good it contains at most one
fault. If it contains no faults the result is immediate. If it contains one
fault we make a case distinction on the location of the fault.
\begin{itemize}
\item If the fault is in one of the 1-ECs in front of the 1-Rec, then by
condition 3 its output contains at most one error. The output of the other
1-ECs is in the code space by condition 1. So the pure state inputs to the
1-Ga contain no errors, i.e., they are all in the code space. Now by condition
4 the output of the 1-Ga contains at most one error in each output block and
by condition 2 this error is corrected by the 1-ECs that follow the 1-Ga.
\item If the fault is in the 1-Ga, then the 1-ECs preceding it have all output
codewords by condition 1. By condition 5 therefore the output of the 1-Ga has at
most one error in each output block, which is corrected by the 1-ECs following
it by condition 2.
\item If the fault is in one of the 1-ECs after the 1-Ga, then the 1-ECs
preceding the 1-Ga have all output codewords by condition 1. Now by condition 4
the output of the 1-Ga has no errors. By condition 3 the output of the 1-Rec
now contains at most one error.
\end{itemize}
A similar argument goes for a 1-Rec for qubit preparation. Either the fault
lies in the preparation, in which case it is corrected by the 1-ECs that
follow it, or the fault is in one of the 1-ECs in which case condition 3
ensures the output contains at most one error. For 1-Recs for qubit measurement
the fault can only lie in one of the preceding 1-ECs, in which case the ideal
1-decoder will correct it.

We only show the inductive step for $(k+1)$-exRecs for gate applications, those
for qubit preparations and measurements are done in a similar fashion. We
need to show that $(k+1)$-exRecs followed by an ideal $(k+1)$-decoder are
equivalent to the $(k+1)$-ECs followed by an ideal $(k+1)$-decoder followed by
the gate the $(k+1)$-Rec is meant to implement. By the definition of an ideal
$(k+1)$-decoder we can view it as a number of $k$-decoders followed by a
1-decoder. Note that each such $k$-decoder is preceded by a $k$-Rec, namely
those the $(k+1)$-Rec is made of. Using the induction hypothesis we can move
the $k$-decoders in front of these $k$-Recs, leaving them as the ideal 1-Recs
they are meant to implement. Now by the base case each 1-Rec followed
by a 1-decoder is equivalent to a 1-decoder followed by the gate the 1-Rec is
meant to implement.

Now our circuit has the following shape. First there are a number of
$(k+1)$-ECs, then the $k$-decoders, then 1-decoders and finally the gate our
$(k+1)$-exRec was meant to implement. But again by the definition of ideal
decoders, this is equivalent to the $(k+1)$-ECs followed by a $(k+1)$-decoder
followed by the gate the $(k+1)$-exRec was meant to implement. This completes
the proof.
\end{proof}

For the second part of the proof we use the following short lemma.

\begin{lemma}[{\cite[Lemma 4]{aliferis_2005}}]
\label{ali4}
Assume conditions 1-5. If all $k$-exRecs in $M_k$ are good, then $M_k$ has the
same probability distribution on its outcome as $M_0$.
\end{lemma}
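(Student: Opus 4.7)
The plan is to use Lemma \ref{ali3} as a local rewriting rule that transforms $M_k$, step by step, into $M_0$. Since all $k$-exRecs in $M_k$ are good, every $k$-Rec in $M_k$ is correct. My strategy is to insert ideal $k$-decoders at the boundary of each qubit's trajectory and then push them backwards through the circuit, replacing each $k$-Rec with its corresponding ideal operation while preserving the joint distribution of outcomes.

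First I would handle the measurement rectangles at the end of $M_k$. Since each such rectangle is a correct $k$-Rec for qubit measurement, by definition it is equivalent to an ideal $k$-decoder followed by the same measurement as in $M_0$. After this replacement, every outgoing $k$-block that is read out is processed by an ideal $k$-decoder followed by an $M_0$-measurement.

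Next, proceeding backwards through the circuit in reverse topological order, I would replace each gate $k$-Rec. By correctness, a $k$-Rec for a gate followed by ideal $k$-decoders on each of its output blocks is equivalent to ideal $k$-decoders on the input blocks followed by the ideal gate. This rewrite turns the physical $k$-Rec into an ideal gate and pushes the decoders one $k$-Rec backwards. Iterating until every gate $k$-Rec has been processed, the ideal decoders come to rest immediately after the preparation $k$-Recs. Correctness for preparation $k$-Recs then says that a preparation $k$-Rec followed by an ideal $k$-decoder equals the ideal preparation in $M_0$. Absorbing these leaves exactly the ideal circuit $M_0$, and since every rewriting step preserved the probability distribution of measurement outcomes, $M_k$ and $M_0$ have the same output distribution.

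The main point to check carefully is that the rewrites compose correctly for multi-qubit gates: correctness of a gate $k$-Rec lets us commute decoders past it only when ideal $k$-decoders are present on \emph{all} of its output $k$-blocks simultaneously. Processing gates in reverse topological order guarantees this, because every output $k$-block of a given $k$-Rec is either measured or fed into a later $k$-Rec that we have already rewritten, so the decoders on those outputs are in place precisely when we need them. Beyond this piece of bookkeeping, the argument is a routine application of Lemma \ref{ali3} and there is no real technical obstacle.
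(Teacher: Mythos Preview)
Your proposal is correct and is in fact the standard decoder-pushing argument. The paper's own proof is much terser and phrased a bit differently: rather than explicitly inserting ideal $k$-decoders at the measurements and sweeping them backward via the commutation definition of ``correct,'' the paper argues informally in terms of error counts, saying that good $k$-exRecs ensure preparation $k$-Recs output at most one error per block, gate $k$-Recs do not spread those errors, and hence the measurement $k$-Recs receive inputs with at most one error and act faithfully. Your approach adheres more closely to the paper's own \emph{definition} of correctness (commutation with the ideal $k$-decoder) and makes the composition of the local rewrites explicit, including the bookkeeping for multi-qubit gates; the paper's version is essentially a one-sentence sketch that leans on the same underlying idea but expresses it through error propagation rather than decoder commutation. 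The two arguments are equivalent in content, and yours is the more careful rendering.
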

\begin{proof}
By Lemma \ref{ali3} all $k$-exRecs being good implies that all the $k$-Recs
contained in them are correct. In particular all qubit-preparation $k$-Recs
output at most one error and all the $k$-Recs for gate applications do no spread
this error. Thus at most one error per block arrives at the $k$-Recs for qubit
measurement at the end of the circuit and these $k$-Recs perform the measurement
faithfully, i.e., without faults.
\end{proof}

Our last lemma will show that there are few bad $k$-exRecs.

\begin{lemma}[{\cite[Lemma 2]{aliferis_2005}}]
\label{ali2}
Let $A$ be the largest number of pairs of locations in any 1-exRec. Assuming a
noise model where faults occur in a location within a $k$-exRec with probability
$\epsilon$ independently, the probability $\epsilon^{(k)}$ that a $k$-exRec is
bad satisfies
\begin{equation*}
\epsilon^{(k)} \leq \frac{(A\epsilon)^{2^k}}{A}.
\end{equation*}
\end{lemma}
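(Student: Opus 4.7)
The plan is to induct on $k$. The base case $k=1$ follows essentially by definition: a 1-exRec is bad exactly when at least two of its locations fault, so by a union bound over all pairs of locations and the independence assumption, the probability of badness is at most $A\epsilon^2$, which equals $(A\epsilon)^{2^1}/A$, as required.

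For the inductive step, I would exploit the self-similar structure of the construction. A $k$-exRec, viewed one level up, is a $(k-1)$-exRec in $M_1$ whose locations have themselves been expanded into $(k-1)$-exRecs; so the number of pairs of $(k-1)$-exRecs lying inside a $k$-exRec is bounded by the same quantity $A$ that bounds pairs of locations in a 1-exRec. By definition, a $k$-exRec is bad precisely when it contains at least two bad $(k-1)$-exRecs, and if we phrase this in terms of \emph{independent} bad $(k-1)$-exRecs (in the sense the author defined just before the lemma), it suffices to union-bound over pairs. For each such pair, the key point is that the notion of independence was tailored so that the faults witnessing badness of the two $(k-1)$-exRecs lie in disjoint sets of locations; hence by independence of faults across locations the joint probability factorises, giving a bound $(\epsilon^{(k-1)})^2$ for the pair. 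Summing over the at most $A$ pairs yields $\epsilon^{(k)} \leq A\,(\epsilon^{(k-1)})^2$.

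Plugging in the inductive hypothesis,
\begin{equation*}
\epsilon^{(k)} \;\leq\; A \left(\frac{(A\epsilon)^{2^{k-1}}}{A}\right)^{\!2} \;=\; \frac{(A\epsilon)^{2^{k}}}{A},
\end{equation*}
which closes the induction.

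The main obstacle I expect is justifying the recurrence $\epsilon^{(k)} \leq A\,(\epsilon^{(k-1)})^2$, and in particular two issues hidden inside it: first, that the combinatorial count of pairs of $(k-1)$-exRecs inside a $k$-exRec really is bounded by the same $A$ (this is where the recursive ``1-exRec of 1-exRecs'' structure of a $k$-exRec has to be invoked carefully, so that $A$ can be reused uniformly across levels), and second, that the definition of independent bad exRecs \textbf{makes} the badness events probabilistically independent by assigning each witnessing fault to at most one of the two exRecs, so no double-counting of the shared $k$-EC corrupts the product bound. Once these two bookkeeping points are granted, the remainder is the clean algebraic induction above.
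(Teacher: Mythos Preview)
Your proposal is correct and follows essentially the same approach as the paper: establish the base case $\epsilon^{(1)}\leq A\epsilon^2$ by a union bound over pairs, derive the recurrence $\epsilon^{(k)}\leq A(\epsilon^{(k-1)})^2$ from the self-similar structure, and solve it. If anything, you are more careful than the paper's own proof, which simply asserts that the badness events for two $(k-1)$-exRecs are independent and relegates the overlap subtlety you flagged to a remark after the lemma.
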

\begin{proof}
The probability that any given pair of locations in a 1-exRec is faulty is
bound by $\epsilon^2$, because the faults are independent. Thus the probability
that a 1-exRec is bad, i.e., contains at least two faults, is $\epsilon^{(1)}
\leq A\epsilon^2$. Similarly a $k$-exRec is bad if it contains at least two bad
$(k-1)$-exRecs. The events of any two $(k-1)$-exRecs being bad is also
independent, so the probability that a $k$-exRec is bad is $\epsilon^{(k)} \leq
A(\epsilon^{(k-1)})^2$. Solving this recursion gives us the desired bound.
\end{proof}

We can improve this bound by noting that if a $k$-exRec contains two bad
$(k-1)$-exRecs that are not independent, the $k$-exRec can still be considered
good. The analysis required to arrive at such a better bound is carried out
in \cite[Section 5.2.1]{aliferis_2005}. It is also worth noting that some
pairs of locations are benign, in the sense that if such a pair is faulty the
$k$-exRec can still be correct. A sharper bound can be obtained by not counting
such pairs, see \cite[Section 6]{aliferis_2005} for the revised argument.

It is still clear from the result presented here that if $\epsilon < 1/A$,
then the expected number of bad exRecs decreases doubly exponentially as
$k$ increases. We will now use this consequence of the lemma to prove the
threshold result.

The threshold result will show that we can reduce the \indef{computation error}
of any quantum computation to below an arbitrarily small amount. To formulate
the theorem we must first define what we mean by the computation error. Given the
probability distributions $P = \{p_i\}$ and $P' = \{p'_i\}$ of the measurements
of two quantum computations, we define the \indef{$L_1$-distance} between them
as $\sum_i{\norm{p_i - p'_i}}$. Note that if $P = P'$, then the $L_1$-distance
between them is 0. The computation error of a quantum computation is now defined
as the $L_1$-distance between that computation and the ideal computation.

\begin{theorem}[{\cite[Theorem 1]{aliferis_2005}}]
\label{ali1}
Assume conditions 1-5. Let $A$ be the largest number of pairs of locations in
any 1-exRec and assume a noise model where faults occur at a location with
probability $\epsilon$ independently. If $\epsilon < 1/A$, then for any $\delta$
there exists a level $k$ such that $M_k$ simulates a given circuit $M_0$ with
computation error at most $\delta$.
\end{theorem}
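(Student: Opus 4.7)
My plan is to combine Lemmas \ref{ali3}, \ref{ali4}, and \ref{ali2} via a union bound followed by a simple mixture argument. Let $N_0$ be the (fixed) number of locations in the ideal circuit $M_0$; since each location is replaced by exactly one $k$-Rec in $M_k$, together with its preceding $k$-EC, there are exactly $N_0$ $k$-exRecs to worry about, independently of $k$.

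First I would apply Lemma \ref{ali2} plus a union bound (no independence between different exRecs is needed, so overlapping exRecs cause no trouble) to bound the probability of the event $\neg G$ that at least one $k$-exRec in $M_k$ is bad by $\frac{N_0}{A}(A\epsilon)^{2^k}$. By Lemma \ref{ali4}, on the complementary event $G$ the output distribution of $M_k$ agrees exactly with that of $M_0$.

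Next I would convert this failure probability into an $L_1$-bound on the computation error. Writing $P_k$, $P_0$, $P_{\neg G}$ for the output distribution of $M_k$, of $M_0$, and of $M_k$ conditional on $\neg G$, we have $P_k = \prob(G)\, P_0 + \prob(\neg G)\, P_{\neg G}$, so
\begin{equation*}
\sum_i \norm{P_k(i) - P_0(i)} = \prob(\neg G) \sum_i \norm{P_{\neg G}(i) - P_0(i)} \leq 2\,\prob(\neg G) \leq \frac{2 N_0}{A}(A\epsilon)^{2^k},
\end{equation*}
using that the $L_1$-distance between any two probability distributions is at most $2$.

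Finally, since $\epsilon < 1/A$ gives $A\epsilon < 1$, the right-hand side decays doubly exponentially in $k$, and one can pick, for instance, any $k \geq \lceil \log_2 \log_{1/(A\epsilon)}(2N_0/(A\delta)) \rceil$ to drive the computation error below $\delta$. Once the three preceding lemmas are in place this step is essentially bookkeeping; the real work---where the threshold $1/A$ emerges and where the polynomial circuit size $N_0$ gets beaten---is already done inside Lemma \ref{ali2}, whose doubly exponential decay is the true engine of the argument. The only genuinely conceptual step in the theorem itself is recognising that conditioning on ``all good'' turns the computation error into $2\,\prob(\neg G)$ via the mixture decomposition.
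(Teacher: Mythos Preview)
Your proposal is correct and essentially identical to the paper's own proof: the paper likewise applies Lemma~\ref{ali2} and a union bound over the $L$ (your $N_0$) $k$-exRecs to bound the failure probability, invokes Lemma~\ref{ali4} on the good event, uses the same mixture decomposition to turn this into the $L_1$-bound $\delta \leq 2P^{(k)}_{\mathrm{fail}} \leq \frac{2L(A\epsilon)^{2^k}}{A}$, and then solves for $k$. Your remark that the union bound needs no independence between exRecs (so overlaps are harmless) is a useful clarification the paper leaves implicit.
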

\begin{proof}
Let $P^\textrm{(ideal)}$ be the probability distribution of the outcome of a
measurement of the ideal circuit $M_0$ and let $P^\textrm{(actual)}$ be that of
a measurement of the circuit $M_k$. We define $\delta$ to be the $L_1$ distance
between these distributions, i.e., $\delta := \sum_i \norm{p_i^\textrm{(actual)}
- p_i^\textrm{(ideal)}}$. Let $L$ be the number of locations in $M_0$ and note
that each such location is encoded in $M_k$ by a $k$-Rec.

Our computation will succeed if there are no bad $k$-exRecs $M_k$, but it might
fail if there are. There are $L$ $k$-Recs in $M_k$ and by Lemma \ref{ali2} the
probability that a $k$-Rec is bad is bound by $\epsilon^{(k)}$. So by the
union bound we have that the probability that at least one $k$-Rec in $M_k$ is
bad is
\begin{equation*}
P^{(k)}_\textrm{fail} \leq L\epsilon^{(k)} \leq \frac{L(A\epsilon)^{2^k}}{A}.
\end{equation*}
Let us call the averaged probability distribution over the outcomes of
computations with at least one bad $k$-Rec $P^\textrm{(fail)}$. Naturally by
Lemma \ref{ali4} we know that if all $k$-exRecs are good, then $P^\textrm{
(actual)} = P^\textrm{(ideal)}$. This lets us write
\begin{equation*}
p_i^\textrm{(actual)} = (1 - P^{(k)}_\textrm{fail})p_i^\textrm{(ideal)} +
P^{(k)}_\textrm{fail}p_i^\textrm{(fail)}.
\end{equation*}
Which gives us
\begin{align*}
\delta &= \sum_i \norm{p_i^\textrm{(actual)} - p_i^\textrm{(ideal)}} \\
&= (1 - P^{(k)}_\textrm{fail}) \sum_i \norm{p_i^\textrm{(ideal)} -
  p_i^\textrm{(ideal)}} + P^{(k)}_\textrm{fail}\norm{p_i^\textrm{(fail)} -
  p_i^\textrm{(ideal)}} \\
&= 0 +  P^{(k)}_\textrm{fail}\norm{p_i^\textrm{(fail)} - p_i^\textrm{(ideal)}}\\
&\leq 2P^{(k)}_\textrm{fail} \\
&\leq \frac{2L(A\epsilon)^{2^k}}{A},
\end{align*}
where the first inequality is because the maximum $L_1$ distance between any two
probability distributions is 2. We can rewrite this inequality to see that we
can pick $k$ such that
\begin{equation*}
2^k \geq \frac{\log(\frac{2 L}{\delta A})}{\log(\frac{1}{\epsilon A})}
\end{equation*}
to achieve an error less than or equal to $\delta$.
\end{proof}
The crucial observation is that $k$ scales at about $\log\log(1 /\delta)$, meaning
that the error can be doubly exponentially reduced by only increasing the level
of the simulation linearly.

We have already hinted at several possible optimizations for this result by
refining the analysis, such as counting only pairs of independent bad $k$-exRecs.
In this paper we only consider QECCs that can correct 1 error, but there are
also QECCs that can correct more errors. Building fault-tolerant computers
using such QECCs can yield better thresholds, see, e.g., \cite{aliferis_2005}
or \cite{paetznick}. Another way to bring fault-tolerant quantum computing
forward is by showing that threshold results exist for a wide variety of
different noise models. This section dealt exclusively with the simplest
possible noise model in terms of analysis; stochastically independent noise. In
the following section we will show that threshold results can also be obtained
for slightly less favorable noise models.

%
%
\section{A threshold result for local non-Markovian noise}
\label{nonmarkov}
One of the properties that makes the independent noise model we have looked
at so far easy to analyze, is that the errors (or faults) it introduces are
not correlated in space or in time. In other words, an error is just as
likely to occur at a location close to where another error occurs as
it is to occur anywhere else in the circuit. The same goes for temporal
correlations: there are none. These restrictions may be physically unrealistic.
Removing them and allowing noise to be correlated in time and space is a first
step towards adversarial noise models. This section presents a threshold result
shown in \cite[Section 11]{aliferis_2005} and \cite{terhal_2005} for such a
noise model.

In the previous sections we have looked at quantum circuits as closed systems,
somehow isolated from their environment. The evolution of any closed quantum
system over time is governed by the Schr\"odinger equation,
\begin{equation*}
i \hbar \frac{d \ket{\phi(t)}}{d t} = H \ket{\phi(t)},
\end{equation*}
where $\hbar$ is Planck's constant and $H$ is a Hermitian matrix called the
\indef{Hamiltonian} of the system. The Hamiltonian is the observable for the
total energy of the system. We assume for now that $H$ is time-independent,
but it is also possible to reason about time-dependent Hamiltonians, i.e.,
Hamiltonians that are parametrized by a time variable. When we reason
about small enough time intervals though, we can treat the Hamiltonian for
a single time interval as one that is not time-dependent. To compute the
evolution of the system over some time period, say for the time interval
$(t_1, t_2)$, we solve the Schr\"odinger equation to obtain
\begin{equation*}
\ket{\phi(t_2)} = e^{\frac{-i (t_2 - t_1) H}{\hbar}} \ket{\phi(t_1)}.
\end{equation*}
It is customary to absorb $1 / \hbar$ into $H$, so we can write
\begin{equation*}
\ket{\phi(t_2)} = e^{-i (t_2 - t_1) H} \ket{\phi(t_1)}.
\end{equation*}
By linear algebra it can be shown that if $H$ is a Hermitian matrix, then
$e^{i H}$ is a unitary matrix. In particular $e^{-i (t_2 - t_1) H} =
U(t_2, t_1)$ for some unitary operator $U(t_2, t_1)$, which is called a
\indef{time-evolution operator}. This explains why we can model quantum
computing by quantum  circuits consisting of unitary operators. See, e.g.,
\cite[Chapter 2.2.2]{nielsen} for more details about the relation between
the Schr\"odinger equation and the quantum circuit model.

As opposed to the previous section we will now take the system's environment,
which we also refer to as a \indef{bath}, into account. Because the Schr\"odinger
equation only applies to closed quantum systems, this means that the
Hamiltonian must also describe the evolution of the bath and the interaction
between the bath and the system, i.e., our circuit. In general the Hamiltonian
for of the system and bath may be time-dependent. We may express the
time-dependent Hamiltonian $H(t)$ of our system and bath as
\begin{equation*}
H(t) = H_S(t) + H_{SB}(t) + H_B(t),
\end{equation*}
where $H_S(t)$ is the Hamiltonian for the system in isolation, $H_B(t)$ that
for the bath in isolation and $H_{SB}(t)$ that for the interaction between the
system and the bath. We refer to the latter as the \indef{interaction
Hamiltonian}. So far we have not placed any restrictions on the noise. We
limit the power of the noise by requiring that the interaction Hamiltonian
has the form
\begin{equation*}
H_{SB}(t) = \sum_{a \in A_t} H_{SB,a},
\end{equation*}
where each $a \in A_t$ is a set of qubits that are acted upon by the same gate
in the circuit at time $t$. For example if $q_1$ and $q_2$ are acted upon by a
CNOT gate at time $t$, then $\{q_1, q_2\} \in A_t$. Another example would be if
a qubit $q_3$ is resting at time $t'$ (remember this is equivalent to an $I$
gate acting on it), then $\{q_3\} \in A_{t'}$. We call a pair $(a, t)$ such
that $a \in A_t$ a \indef{microlocation}. Note that this does
\emph{not} correspond to what we called a location in Section \ref{frame}, but
that a location in that sense does consist of a number of microlocations as
described here.

This restriction on the interaction Hamiltonian limits the power of the noise
in the sense that errors can only be correlated when the qubits involved are
already being correlated by the circuit. Since each gate in the circuit typically
operates on few qubits (1, 2 or 3) for short periods of time, this model allows
for weak spatial and temporal correlations. Long-range correlations (both in
space and time) are still possible, because the interaction Hamiltonian can
move information from one place in time (space) to another via the bath. We
will see, however, that the influence of such indirect correlations does not
stand in the way of obtaining a threshold. Still the bath can be seen as having
a `memory', even though the noise it produces is highly localized in nature.
Informally we can say that a `memoryless' process is a Markovian process.
This is why we refer to this noise model as \indef{local non-Markovian noise},
because it is a process that has a `memory' and acts locally.

To aid our analysis we discretize the evolution of our system. Say that the
total time required for our computation is $T$ and let $t_0$ be the fundamental
gate. We now divide our total time $T$ into $N$ intervals of length $\Delta$
such that $t_0 \gg \Delta$, where $t_0 / \Delta$ is integer. This last condition
allows us to safely ignore factors of $O(\Delta^2)$ in our analysis. Because
$\Delta$ is so small, we can act as though the $H(t')$ for $t \leq t' \leq t +
\Delta$ are all approximately equal to $H(t)$. We can then, using the Trotter
expansion, express the evolution of our system for the time interval $(t,
t+\Delta)$ as
\begin{align*}
U(t+\Delta, t) &\approx e^{-i\Delta H(t)} \\
&\approx e^{-i\Delta H_S(t)} e^{-i\Delta H_B(t)} e^{-i\Delta H_{SB}(t)} \\
&= e^{-i\Delta H_S(t)} e^{-i\Delta H_B(t)} e^{-i\Delta \sum_{a \in A_t}
  H_{SB,a}} \\
&\approx e^{-i\Delta H_S(t)} e^{-i\Delta H_B(t)} \prod_{a \in A_t} e^{-i\Delta
  H_{SB,a}}.
\end{align*}
Note that we have disregarded terms with norm of $O(\Delta^2)$. It can be shown
that for small enough $\Delta$, the error in this approximation is small enough
for our purposes. Using a Taylor expansion we can further simplify this to
\begin{equation*}
U(t+\Delta, t) \approx e^{-i\Delta H_S(t)} e^{-i\Delta H_B(t)} \prod_{a \in A_t}
(I - i\Delta H_{SB,a}).
\end{equation*}
We can express the entire evolution of the system as a product of $N$ such time
evolution operators, each for a time interval of length $\Delta$. Writing out
this product, we obtain a sum where in each summand the interaction part will
contain $I$ for some microlocations and $-i\Delta H_{SB,a}$ for others.
We call the whole sum the \indef{fault-path decomposition} of the computation
and refer to a single summand as a \indef{fault path}. When a factor $-i\Delta
H_{SB,a}$ occurs in some fault path at some microlocation we say that
the fault path has a fault at that microlocation.

In Section \ref{frame} we proved a threshold result by showing that if exRecs
were good, then they were correct; that correct exRecs yield good final answers
to computations and that there were few bad exRecs. In this section we will not
only reason about exRecs, but also about fault paths. We will show that the
norm of the sum over bad fault paths, i.e., fault paths with many faults, can
be made arbitrarily small. Then we show that a small norm of bad fault paths
lead to approximately correct answers for the computation. From that we will
conclude with a threshold theorem for local non-Markovian noise.

First we still need to define the strength of such noise, cf. the error-rate 
$\epsilon$ from Section \ref{frame}. We express this in terms of the
\indef{norm} of the interaction Hamiltonian. The norm of an operator $A$ is
defined as
\begin{equation*}
\snorm{A} = \sup_{\ket{\phi}}\frac{\snorm{A\ket{\phi}}}{\snorm{\ket{\phi}}},
\end{equation*}
where $\snorm{\ket{\phi}}$ is the Euclidean norm of a state $\ket{\phi}$,
i.e., $\snorm{\ket{\phi}} = \sqrt{\braket{\phi}{\phi}}$. We shall make use of
the following properties of the norm. For all operators $A$ and
$B$ we have
\begin{equation*}
\snorm{A + B} \leq \snorm{A} + \snorm{B} \qquad \text{and} \qquad \snorm{AB}
\leq \snorm{A}\cdot\snorm{B} = \snorm{A \otimes B}.
\end{equation*}
Let $\lambda_0$ be an upper bound for the norm of the $H_{SB,a}$ Hamiltonians.
In other words, for all times $t$ and all $a \in A_t$ we have $\snorm{H_{SB,a}}
\leq \lambda_0$.

Because we picked $\Delta$ such that $t_0 / \Delta$ is an integer and $t_0
\gg \Delta$, the time spent executing a single gate is divided into many
microlocations, which we can group into so-called \indef{locations}. Note
that these locations are the same as the locations we considered in Section
\ref{frame}. Given a set $\mathcal{I}_R$ of $r$ locations we let $E(
\mathcal{I}_R)$ be the sum of all fault paths with faults at all of the $r$
locations in $\mathcal{I}_R$. If for all $r$ and all $\mathcal{I}_R$ with
$\norm{\mathcal{I}_R} = r$ we have $\snorm{E(\mathcal{I}_R)} \leq \eta^r$,
we call $\eta$ the \indef{noise strength}. The motivation for this definition
of noise strength is that in this model the noise is caused by energy being
transferred from the system to the bath or vice versa. The strength of these
interactions thus determines the strength of the noise.

As in Section \ref{frame} we can reason about locations being hit by a fault or
``being faulty''. We say that a location is hit by a fault if at least one of
the microlocations that it is made up of is faulty. As before we call our
ideal circuit $M_0$ and replace all locations in $M_0$ by 1-Recs to obtain
$M_1$ and so on. Note that each fault path describes a quantum evolution and
can therefore be seen as a circuit in itself. This allows us to say that a
fault path of $M_1$ is \indef{good} if each 1-Rec contains at most 1 fault
and it is \indef{bad} otherwise. In general a fault path of $M_k$ is good if
each $k$-Rec contains at most one bad $(k-1)$-Rec and it is bad otherwise.
We need to bound the norm of the sum of all bad fault paths of $M_k$. We
start by considering how to bound the norm of the sum of all fault paths
with at least 2 faults in $M_0$. From this we will obtain a bound on the
sum of all bad fault paths in $M_1$ and from that our desired bound on the
sum of the bad fault paths in $M_k$.

Let $F$ denote the sum of all fault paths with at least 2 faults in $M_0$. We
would like to express $F$ as a sum of $E(\mathcal{I}_R)$s, because we have a
bound for their norms. Let $\mathcal{I}$ denote the set of all locations
in $M_0$. We cannot say that $F = \sum_{r = 2}^{\norm{\mathcal{I}}}\sum_{
\mathcal{I}_R \subseteq \mathcal{I}: \norm{\mathcal{I}_R} = r} E(\mathcal{I}_R)$,
because we would be massively overcounting the number of bad fault paths. To see
this note that a fault path with faults at all $\norm{\mathcal{I}}$
locations is counted for every value of $r$. To properly count $F$ we need
a combinatorial lemma.

\begin{lemma}[{\cite[Lemma 7]{aliferis_2005}}]
\label{combi}
Let $\mathcal{I}$ be the set of all locations in $M_0$, then
\begin{equation*}
F = \sum_{r = 2}^{\norm{\mathcal{I}}} (-1)^{r} (r-1) \sum_{
\mathcal{I}_R \subseteq \mathcal{I}: \norm{\mathcal{I}_R} = r} E(\mathcal{I}_R).
\end{equation*}
\end{lemma}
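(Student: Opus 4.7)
The approach is a standard inclusion-exclusion: decompose both sides of the identity into elementary fault paths, then verify that each such path has the same coefficient on each side. By the fault-path decomposition, every fault path is specified uniquely by the set $\mathcal{J} \subseteq \mathcal{I}$ of microlocations (or, grouped up, locations) at which it has a fault; write $F_{\mathcal{J}}$ for the single fault path whose fault set is exactly $\mathcal{J}$. Then, by the very definition of $E(\mathcal{I}_R)$ as the sum of fault paths having a fault at every location of $\mathcal{I}_R$, we can write
\begin{equation*}
E(\mathcal{I}_R) = \sum_{\mathcal{J} \supseteq \mathcal{I}_R} F_{\mathcal{J}}, \qquad F = \sum_{\mathcal{J} \subseteq \mathcal{I},\; \norm{\mathcal{J}} \geq 2} F_{\mathcal{J}}.
\end{equation*}

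The plan is therefore to fix an arbitrary $\mathcal{J}$ and compute the coefficient of $F_{\mathcal{J}}$ on the right-hand side of the claimed identity. For a fixed $\mathcal{J}$ with $\norm{\mathcal{J}} = s$, the path $F_{\mathcal{J}}$ appears inside $E(\mathcal{I}_R)$ precisely when $\mathcal{I}_R \subseteq \mathcal{J}$, so there are $\binom{s}{r}$ sets $\mathcal{I}_R$ of size $r$ contributing it. The total coefficient of $F_{\mathcal{J}}$ on the right-hand side is thus
\begin{equation*}
c_s := \sum_{r=2}^{s} (-1)^{r}(r-1)\binom{s}{r}.
\end{equation*}
The goal is to prove $c_s = 1$ for $s \geq 2$ and $c_s = 0$ for $s \in \{0,1\}$, which matches the left-hand side's coefficient (namely $1$ if $s \geq 2$ and $0$ otherwise).

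The cases $s = 0$ and $s = 1$ are immediate since the summation range is empty. For $s \geq 2$ I would extend the sum to start at $r = 0$ and compensate: the $r = 1$ term vanishes because of the $(r-1)$ factor, and the $r = 0$ term equals $-1$, so $c_s = -1 + \sum_{r=0}^{s}(-1)^{r}(r-1)\binom{s}{r}$. Splitting that sum gives
\begin{equation*}
\sum_{r=0}^{s}(-1)^{r}(r-1)\binom{s}{r} = \sum_{r=0}^{s}(-1)^{r} r\binom{s}{r} - \sum_{r=0}^{s}(-1)^{r}\binom{s}{r},
\end{equation*}
and both of the standard binomial identities on the right vanish for $s \geq 2$ (the first via the identity $r\binom{s}{r} = s\binom{s-1}{r-1}$ followed by $(1-1)^{s-1} = 0$, the second via $(1-1)^{s} = 0$). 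Hence $c_s = 0 - (-1) = 1$, as required.

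The only real subtlety is not combinatorial but bookkeeping: one must be careful that the definition of $E(\mathcal{I}_R)$ really does include every fault path whose fault set contains $\mathcal{I}_R$ (and not just fault paths with fault set exactly equal to $\mathcal{I}_R$), since this is what drives the overcounting that the alternating coefficients $(-1)^{r}(r-1)$ are designed to cancel. Once that reading is fixed, the proof is a one-line inclusion-exclusion calculation, so no step should present a genuine obstacle.
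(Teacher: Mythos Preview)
Your approach is essentially identical to the paper's: both fix a fault path with a given number of faulty locations, count how many $\mathcal{I}_R$ of each size contain it, and then verify the alternating binomial sum $\sum_{r=2}^{s}(-1)^r(r-1)\binom{s}{r}=1$ via the identities $\sum_r (-1)^r\binom{s}{r}=0$ and $\sum_r (-1)^r r\binom{s}{r}=0$. One small slip: your intermediate line ``$c_s = -1 + \sum_{r=0}^{s}\ldots$'' has the sign backwards (adding the $r=0$ term $-1$ to $c_s$ gives the full sum, so $c_s = 1 + \sum_{r=0}^{s}\ldots$), though your final computation $c_s = 0 - (-1) = 1$ is correct.
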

\begin{proof}
We must show that every fault path with at least two faults is counted exactly
once and that fault paths with less than two fault are not counted. The latter is
easy to see, because such a fault path will not be in any of the $E(\mathcal{I}_R
)$ since we start from $r=2$. To see the former, let $f$ be any fault path with
at least two faults and say that $k$ is the number of faults on $f$. First note
that $f$ does not occur in any $E(\mathcal{I}_R)$ for $r > k$. So it suffices to
show that $f$ occurs exactly once in
\begin{equation*}
\sum_{r=2}^k (-1)^r (r-1) \sum_{\mathcal{I}_R \subseteq \mathcal{I}:
\norm{\mathcal{I}_R}=r} E(\mathcal{I}_R).
\end{equation*}
Given $2 \leq r \leq k$, there are $\binom{k}{r}$ sets $\mathcal{I}_R \subseteq
\mathcal{I}$ with $\norm{\mathcal{I}_R} = r$ that are a subset of the locations
at which $f$ has faults. Only for those $\mathcal{I}_R$, $f$ is counted in
$E(\mathcal{I}_R)$ and it occurs there once. So the number of times $f$ is
counted in this sum and with that in the total sum is
\begin{align*}
\sum_{r=2}^k (-1)^r (r-1) \binom{k}{r} &= \sum_{r=2}^k (-1)^r r \binom{k}{r} -
  \sum_{r=2}^k (-1)^r \binom{k}{r} \\
&= \sum_{r=2}^k (-1)^r k \binom{k-1}{r-1} - \Bigg( \sum_{r=0}^k (-1)^r
  \binom{k}{r}- (-1)^0 \binom{k}{0} - (-1)^1 \binom{k}{1} \Bigg) \\
&= -k \sum_{t=1}^{k-1} (-1)^t \binom{k-1}{t} - (k - 1) \\
&= -k \Bigg(\sum_{t=0}^{k-1} (-1)^t \binom{k-1}{t} - (-1)^0 \binom{k-1}{0}
  \Bigg) - (k-1) \\
&= k - k + 1 = 1,
\end{align*}
where for the third and fifth equality we use the binomial theorem, which states
in particular that $\sum_{k=0}^{n} (-1)^k \binom{n}{k} = 0$.
\end{proof}

We can now compute $\snorm{F}$. Note that the norm of a sum is bounded by a sum
of norms and that there are $\binom{\norm{\mathcal{I}}}{r}$ different
$\mathcal{I}_R \subseteq \mathcal{I}$ with $\norm{\mathcal{I}_R} = r$. Letting
$\norm{\mathcal{I}} = A$ we have
\begin{align*}
\snorm{F} &\leq \sum_{r=2}^A (r-1) \binom{A}{r} \eta^r 
= \binom{A}{2} \eta^2 \sum_{r=2}^A \frac{2}{r} \binom{A-2}{r-2} \eta^{r-2} \\
&\leq \binom{A}{2} \eta^2 \sum_{t=0}^{A-2} \binom{A-2}{t} \eta^t 
= \binom{A}{2} \eta^2 (\eta + 1)^{A-2} \\
&\leq \binom{A}{2} \eta^2 e^{(A-2)\eta},
\end{align*}
where the last inequality is because $1 + \eta \leq e^\eta$. Observe that we need
to drop the negative signs from Lemma \ref{combi}, because the fault paths can
positively interfere with one another. The only thing we can safely say for any
$E(\mathcal{I}_R)$ and $E(\mathcal{I}_R')$ is that $\snorm{E(\mathcal{I}_R) -
E(\mathcal{I}_R')} \leq \snorm{E(\mathcal{I}_R)} + \snorm{E(\mathcal{I}_R')}$.
This and the assumption that the noise obeys $\snorm{E(\mathcal{I}_R)} \leq
\eta^r$ explains the first inequality.

To bound the norm of the bad fault paths in $M_1$ we will need some additional
notation. We let $\mathcal{I}_R^{(1)}$ denote a set of 1-Recs in $M_1$ with
$\norm{\mathcal{I}_R^{(1)}} = r$ and we let $E(\mathcal{I}_R^{(1)})$ be the sum
of all fault paths where all of the 1-Recs in $\mathcal{I}_R^{(1)}$ are bad.
Recall that a 1-Rec is bad if it contains at least 2 faults. For a given
$\mathcal{I}_R^{(1)}$ we can label the bad 1-Recs it contains by $b \in [r]$
according to some arbitrary ordering. We then let $\mathcal{I}(b)$ be the set
of all locations in the 1-Rec labeled by $b$. In general we define
$\mathcal{I}_R^{(k)}$ to be a set of $k$-Recs in $M_k$ with $\norm{
\mathcal{I}_R^{(k)}} = r$ and let $E(\mathcal{I}_R^{(k)})$ be the sum of
all fault paths where all of the $k$-Recs in $\mathcal{I}_R^{(k)}$ are bad. For
a given $\mathcal{I}_R^{(k)}$ we label the bad $k$-Recs by $b \in [r]$. We let
$\mathcal{I}^{(k)}(b)$ be the set of all $(k-1)$-Recs in the $k$-Rec labeled by
$b$.

\begin{lemma}[{\cite[Lemmas 8,9]{aliferis_2005}}]
\label{ali89}
If for all $r$ and all $\mathcal{I}_R$ with $\norm{\mathcal{I}_R}=r$, $\snorm{E(
\mathcal{I}_R)} \leq \eta^r$ and $\eta \leq \frac{1}{\binom{A}{2}e^{(A-2)
\eta}}$, then
\begin{equation*}
\snorm{E(\mathcal{I}_R^{(k)})} \leq \big( \eta^{(k)} \big)^r,
\end{equation*}
where
\begin{equation*}
\eta^{(k)} = \frac{\Big(\binom{A}{2}\eta e^{(A-2)\eta}\Big)^{2^k}}{\binom{A}{
2}e^{(A-2)\eta}},
\end{equation*}
$\norm{\mathcal{I}_R^{(k)}} = r$ and $A$ is the maximum number of locations in any
1-Rec.
\end{lemma}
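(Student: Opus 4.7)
The plan is induction on $k$, with the base case $k=0$ being the hypothesis $\snorm{E(\mathcal{I}_R)} \leq \eta^r$ of the lemma (identifying a $0$-Rec with a single location of $M_0$). For the inductive step I would reduce the bound at level $k$ to a bound at level $k-1$ by applying Lemma \ref{combi} \emph{inside each of the $r$ labelled $k$-Recs separately}. The idea is that ``the $b$-th $k$-Rec is bad'' has exactly the same combinatorial structure as ``a 1-Rec is bad'' did in the derivation of $\snorm{F}$ in the bulk text, only one level down, so every step of that calculation carries through essentially verbatim.

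Concretely, fix $\mathcal{I}_R^{(k)}$ with $\norm{\mathcal{I}_R^{(k)}}=r$ and labels $b \in [r]$. Being bad for the $k$-Rec labelled by $b$ means containing at least two bad $(k-1)$-Recs in $\mathcal{I}^{(k)}(b)$, which is precisely the event handled by Lemma \ref{combi}. Multiplying the $r$ resulting expansions together should give
\[
E(\mathcal{I}_R^{(k)}) = \sum_{s_1,\ldots,s_r \geq 2} \prod_{b=1}^{r}(-1)^{s_b}(s_b-1) \sum_{\substack{\mathcal{J}_b \subseteq \mathcal{I}^{(k)}(b) \\ \norm{\mathcal{J}_b}=s_b}} E\Big(\bigcup_{b=1}^{r}\mathcal{J}_b\Big),
\]
where the innermost quantity is a sum of fault paths in which all $s_1+\cdots+s_r$ specified $(k-1)$-Recs are simultaneously bad; by the inductive hypothesis its norm is at most $(\eta^{(k-1)})^{s_1+\cdots+s_r}$. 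Dropping signs via the triangle inequality makes the sums factorize across $b$. Using $\norm{\mathcal{I}^{(k)}(b)}\leq A$ (each $k$-Rec is itself a 1-Rec-shaped object by the recursive concatenation) and repeating the $(\eta+1)^{A-2}\leq e^{(A-2)\eta}$ manipulation used to bound $\snorm{F}$ in the text, each factor is at most $\binom{A}{2}(\eta^{(k-1)})^2 e^{(A-2)\eta^{(k-1)}}$. A short subsidiary induction using the hypothesis $\eta \leq 1/\binom{A}{2}e^{(A-2)\eta}$ gives $\eta^{(k-1)}\leq \eta$ at every level, so $e^{(A-2)\eta^{(k-1)}}\leq e^{(A-2)\eta}$, and a direct substitution of the closed form of $\eta^{(k-1)}$ collapses the $r$-fold product into $(\eta^{(k)})^r$.

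The main obstacle I foresee is the bookkeeping around the combinatorial step: one must argue that applying Lemma \ref{combi} simultaneously inside each of the $r$ $k$-Recs really does count every ``all-$r$-bad'' fault path exactly once, and that the inductive hypothesis genuinely applies to $E(\bigcup_b \mathcal{J}_b)$ even though the $(k-1)$-Recs involved now come from an $r$-tuple of distinct $k$-Recs rather than a single one. Both points reduce to the facts that distinct $k$-Recs involve disjoint $(k-1)$-Recs (so the tuple is a genuine product and the union has size $s_1+\cdots+s_r$), and that Lemma \ref{combi} is purely combinatorial, hence its identity holds unchanged at every level of the hierarchy. Once those are in place, nothing new happens relative to the computation of $\snorm{F}$ already carried out.
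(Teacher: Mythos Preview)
Your proposal is correct and follows the same approach as the paper: induction on $k$, applying Lemma \ref{combi} inside each of the $r$ labelled $k$-Recs, bounding the norm of $E(\bigcup_b \mathcal{J}_b)$ via the inductive hypothesis so that the sums factorize, repeating the $\snorm{F}$ estimate to get $\binom{A}{2}(\eta^{(k-1)})^2 e^{(A-2)\eta^{(k-1)}}$ per factor, and using the threshold assumption to show $\eta^{(k-1)}\leq \eta$ before substituting the closed form. The only cosmetic difference is that the paper takes $k=1$ as the base case and works it out explicitly, whereas you start at $k=0$ (where $\eta^{(0)}=\eta$ makes the claim identical to the hypothesis) and fold the $k=1$ computation into the general inductive step; this is slightly tidier but changes nothing in substance.
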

\begin{proof}
We prove this by induction on $k$. For $k = 1$ we have that $E(\mathcal{I}_R^{
(1)})$ is the sum over all fault paths that have $\geq 2$ faults at every 1-Rec
in $\mathcal{I}_R^{(1)}$. Let $r = \norm{\mathcal{I}_R^{(1)}}$ and note that we
label the 1-Recs in $\mathcal{I}_R^{(1)}$ by $1 \leq b \leq r$. Using Lemma
\ref{combi}, $E(\mathcal{I}_R^{(1)})$ is all the fault paths that occur for every
1-Rec in $\mathcal{I}_R^{(1)}$ in the expression
\begin{equation*}
\sum_{\ell_b = 2}^{\norm{\mathcal{I}(b)}} (-1)^{\ell_b} (\ell_b - 1) \sum_{
\mathcal{J}(b) \subseteq \mathcal{I}(b) : \norm{\mathcal{J}(b)} = \ell_b}
E(\mathcal{J}(b)).
\end{equation*}
To obtain $E(\mathcal{I}_R^{(1)})$ we can thus sum over the sets of locations
in each 1-Rec independently and consider only the fault paths with faults at
all the locations in each $\mathcal{J}$. So we have
\begin{align*}
E(\mathcal{I}_R^{(1)}) &= \sum_{\ell_1=2}^{\norm{\mathcal{I}(1)}}(-1)^{\ell_1}(
  \ell_1-1) \cdots
  \sum_{\ell_r = 2}^{\norm{\mathcal{I}(r)}} (-1)^{\ell_r} (\ell_r - 1) \\
&\qquad \qquad \qquad \sum_{\mathcal{J}(1) \subseteq \mathcal{I}(1):
  \norm{\mathcal{J}(1)}= \ell_1} \cdots
  \sum_{\mathcal{J}(r) \subseteq \mathcal{I}(r): \norm{\mathcal{J}(r)}= \ell_r}
  E\Bigg(\bigcup_{i=1}^r \mathcal{J}(i)\Bigg).
\end{align*}
By our bound on the noise strength we have
\begin{equation*}
\snorm{E\Bigg(\bigcup_{i=1}^r \mathcal{J}(i) \Bigg)} \leq \eta^{\sum_{i=1}^r
\ell_i} = \prod_{i=1}^r \eta^{\ell_i},
\end{equation*}
because $\norm{\bigcup_{i=1}^r \mathcal{J}(i)} \leq \sum_{i=1}^r \ell_i$. The norm
of $E(\mathcal{I}_R^{(1)})$ now becomes
\begin{align*}
\snorm{E(\mathcal{I}_R^{(1)})} &\leq
    \sum_{\ell_1=2}^{\norm{\mathcal{I}(1)}}(-1)^{\ell_1}(\ell_1-1)
    \cdots
    \sum_{\ell_r = 2}^{\norm{\mathcal{I}(r)}} (-1)^{\ell_r} (\ell_r - 1) \\
&\qquad \qquad \qquad \sum_{\mathcal{J}(1) \subseteq \mathcal{I}(1):
    \norm{\mathcal{J}(1)}= \ell_1}
    \cdots
    \sum_{\mathcal{J}(r) \subseteq \mathcal{I}(r): \norm{\mathcal{J}(r)}= \ell_r}
    \prod_{i=1}^r \eta^{\ell_i} \\
&= \prod_{i=1}^r \sum_{\ell_i = 2}^{\norm{\mathcal{I}(i)}} (-1)^{\ell_i} (\ell_i
  - 1)
  \sum_{\mathcal{J}(i) \subseteq \mathcal{I}(i): \norm{\mathcal{J}(i)} = \ell_i}
  \eta^{\ell_i} \\
&= \prod_{i=1}^r \sum_{\ell_i = 2}^{\norm{\mathcal{I}(i)}} (-1)^{\ell_i} (\ell_i - 1)
  \binom{\norm{\mathcal{I}(i)}}{\ell_i} \eta^{\ell_i}.
\end{align*}
So by our previous analysis of $\snorm{F}$ and letting $A$ be the maximum number
of locations in any 1-Rec we obtain that
\begin{equation*}
\snorm{E(\mathcal{I}_R^{(1)})} \leq \Bigg(\binom{A}{2} \eta^2 e^{(A-2)\eta}
\Bigg)^r,
\end{equation*}
thus proving our basis case. For $k + 1$ we note that $E(\mathcal{I}_R^{(k+1)})$
is the sum over all fault paths with $\geq 2$ bad $k$-Recs in each $(k+1)$-Rec
in $\mathcal{I}_R^{(k+1)}$. Reasoning as before we thus obtain
\begin{align*}
\snorm{E(\mathcal{I}_R^{(k+1)})} &= \prod_{i=1}^r \sum_{\ell_i = 2}^{\norm{
  \mathcal{I}^{(k+1)}(i)}} (-1)^{\ell_i} (\ell_i - 1) \binom{\norm{\mathcal{I}^{(k+1)}
  (i)}}{\ell_i} {\eta^{(k)}}^{\ell_i} \\
&\leq \Bigg(\binom{A}{2} (\eta^{(k)})^2 e^{(A-2)\eta^{(k)}} \Bigg)^r.
\end{align*}
By induction hypothesis and the assumption that $\eta \leq \frac{1}{\binom{A}{2}
e^{(A-2) \eta}}$ we have
\begin{align*}
\eta^{(k)} &= \frac{\Big(\binom{A}{2}\eta e^{(A-2)\eta}\Big)^{2^k}}{\binom{A}{2}
  e^{(A-2)\eta}}
= \frac{\binom{A}{2}\eta e^{(A-2)\eta} \Big(\binom{A}{2} \eta e^{(A-2)
  \eta}\Big)^{2^{k}-1}}{\binom{A}{2} e^{(A-2)\eta}} \\
&= \eta \Bigg(\binom{A}{2} \eta e^{(A-2)\eta}\Bigg)^{2^{k}-1} 
\leq \eta \Bigg(\frac{\binom{A}{2} e^{(A-2)\eta}}{\binom{A}{2} e^{(A-2)\eta}}
  \Bigg)^{2^{k}-1} 
= \eta.
\end{align*}
Now we can simplify our bound as
\begin{equation*}
\snorm{E(\mathcal{I}_R^{(k+1)})} \leq \Bigg(\binom{A}{2} (\eta^{(k)})^2
e^{(A-2)\eta} \Bigg)^r.
\end{equation*}
When we fill in the value of $\eta^{(k)}$ obtained from the induction hypothesis
we get
\begin{equation*}
\snorm{E(\mathcal{I}_R^{(k+1)})} \leq \Bigg(\binom{A}{2} \Bigg(
\frac{\Big(\binom{A}{2}\eta e^{(A-2)\eta}\Big)^{2^k}}{\binom{A}{2}
  e^{(A-2)\eta}}
\Bigg)^2
e^{(A-2)\eta} \Bigg)^r
= \Bigg(
\frac{\Big(\binom{A}{2}\eta e^{(A-2)\eta}\Big)^{2^{(k+1)}}}{\binom{A}{2}
  e^{(A-2)\eta}}
\Bigg)^r,
\end{equation*}
thus proving the lemma.
\end{proof}

This lemma is in much the same spirit as Lemma \ref{ali2} and like with that
lemma the bound can be improved by not counting benign pairs of faulty locations,
see \cite[Section 11.5]{aliferis_2005} for details. We are now ready to prove our
threshold result for local non-Markovian noise.

\begin{theorem}[{\cite[Theorem 6]{aliferis_2005}}]
Let $A$ be the maximum number of locations in any 1-Rec and assume a noise
model where for all $r$ and all $\mathcal{I}_R$ with $\norm{\mathcal{I}_R}=r$,
$\snorm{E(\mathcal{I}_R)} \leq \eta^r$. If $\eta < \frac{1}{\binom{A}{2}e^{(A-2)
\eta}}$, then for any $\delta$ there exists a level $k$ such that $M_k$
simulates a given circuit $M_0$ with error at most $\delta$.
\end{theorem}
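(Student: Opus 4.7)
The plan is to mirror the strategy used in Theorem \ref{ali1}, but replacing stochastic probabilities of bad exRecs with operator norms of sums of bad fault paths. The actual evolution of $M_k$ can be written, via the fault-path decomposition, as a sum $U_{M_k} = U_{\text{good}} + U_{\text{bad}}$, where $U_{\text{good}}$ collects all fault paths that are good (recursively, each $k$-Rec contains at most one bad $(k-1)$-Rec) and $U_{\text{bad}}$ collects the rest. The goal is then to show (i) that $U_{\text{good}}$ followed by an ideal $k$-decoder acts identically to the ideal $k$-decoder followed by $M_0$, and (ii) that $\snorm{U_{\text{bad}}}$ can be made arbitrarily small by choosing $k$ large.

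For step (i), I would extend Lemma \ref{ali3} to this continuous-time setting: a good fault path inserts, at each level of concatenation, at most one ``fault'' factor $-i\Delta H_{SB,a}$ inside each relevant exRec. Since conditions 1--5 of the rectangle design are statements about operators (not about which discrete Pauli error struck), the same case analysis as in Lemma \ref{ali3} goes through, and by induction on $k$ the sum of all good fault paths, decoded, equals the ideal evolution of $M_0$.

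For step (ii), I would apply a union bound combined with Lemma \ref{ali89}. A fault path is bad iff there exists at least one bad $k$-Rec, so taking $r=1$ in Lemma \ref{ali89} and summing over all $L$ locations in $M_0$ (each corresponding to one $k$-Rec in $M_k$) yields
\begin{equation*}
\snorm{U_{\text{bad}}} \;\leq\; \sum_{\mathcal{I}_R^{(k)} : \, \norm{\mathcal{I}_R^{(k)}} = 1} \snorm{E(\mathcal{I}_R^{(k)})} \;\leq\; L \, \eta^{(k)}.
\end{equation*}
A standard calculation then translates this operator-norm estimate into an $L_1$-distance bound on measurement outcomes of the form $\delta \leq c \, L \, \eta^{(k)}$ for some small constant $c$. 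The hypothesis $\eta < 1/(\binom{A}{2} e^{(A-2)\eta})$ forces $\binom{A}{2}\eta e^{(A-2)\eta} < 1$, so $\eta^{(k)}$ shrinks doubly exponentially in $k$, and we can pick $k$ roughly of size $\log\log(L/\delta)$ to make the error fall below $\delta$.

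The main obstacle I expect is step (i): writing down explicitly why the correctness argument of Lemma \ref{ali3} survives the passage from a discrete, probabilistic fault model to an operator-valued fault model in which ``a fault'' is the insertion of a factor $-i\Delta H_{SB,a}$ at a microlocation. One has to argue that the category of allowed ``single-fault'' operators on a 1-exRec is closed under the arguments behind conditions 1--5 (so that such a fault acts like a correctable error after the subsequent 1-EC), and that good fault paths, when summed, reconstruct exactly the fault-free evolution of the ideal circuit composed with the decoder. Once that operator-level analog of Lemma \ref{ali3} is in place, the rest of the argument is parallel to the proof of Theorem \ref{ali1}.
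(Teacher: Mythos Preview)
Your overall structure matches the paper's: split the fault-path expansion of $M_k$ into a good part and a bad part, argue the good part reproduces the ideal computation, and bound the norm of the bad part via Lemma \ref{ali89}. Your worry about step (i) is legitimate but not where the real difficulty lies; the paper dispatches it in one sentence, observing that the Euclidean distance between the actual and ideal final states is at most $\snorm{B_k}$ (with the correctness of the good part inherited from the analogs of Lemmas \ref{ali3}--\ref{ali4}).

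The genuine gap is in your step (ii). You claim
\begin{equation*}
\snorm{U_{\text{bad}}} \;\leq\; \sum_{\norm{\mathcal{I}_R^{(k)}} = 1} \snorm{E(\mathcal{I}_R^{(k)})} \;\leq\; L\,\eta^{(k)},
\end{equation*}
calling it a ``union bound.'' But union bounds hold for probabilities because probabilities are nonnegative and subadditive over unions; there is no such inequality for operator norms of sums of fault paths. The operator $U_{\text{bad}}$ is the sum over fault paths with at least one bad $k$-Rec, whereas $\sum_b E(\{b\})$ overcounts every fault path with two or more bad $k$-Recs, so $U_{\text{bad}} \neq \sum_b E(\{b\})$ and the triangle inequality alone does not connect them. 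The paper repairs this by first writing $B_k$ \emph{exactly} as an alternating inclusion--exclusion sum,
\begin{equation*}
B_k = \sum_{r=1}^{L} (-1)^{r-1} \sum_{\norm{\mathcal{I}_R^{(k)}}=r} E(\mathcal{I}_R^{(k)}),
\end{equation*}
verifying combinatorially that each bad fault path is counted once, and only then applying the triangle inequality to every term. This forces you to retain all $r\geq 1$, yielding $\snorm{B_k} \leq \sum_{r=1}^L \binom{L}{r}(\eta^{(k)})^r \leq L\eta^{(k)} e^{(L-1)\eta^{(k)}}$ rather than your $L\eta^{(k)}$. The extra exponential factor is harmless (one bounds it using $\eta^{(k)} \leq \eta$), but the inclusion--exclusion step is not optional: without it the inequality you wrote is simply unjustified.
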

\begin{proof}
Let us write the fault path expansion of $M_k$ as $G_k + B_k$, where $G_k$
contains all the fault paths without any bad $k$-exRecs and $B_k$ contains all
the fault paths with at least one bad $k$-exRec. Let $L$ be the number of
locations in $M_0$ and hence the number of $k$-exRecs in $M_k$.

We claim that
\begin{equation*}
B_k = \sum_{r=1}^{L} (-1)^{r-1} \sum_{\mathcal{I}^{(k)}_R \subseteq
\mathcal{I}^{(k)}: \norm{\mathcal{I}^{(k)}_R} = r}E(\mathcal{I}^{(k)}_R).
\end{equation*}
To prove this claim we must show that every fault path with at least one
fault is counted exactly once and that fault paths without faults are not
counted. The latter is a trivial observation. To see the former, let $f$ be
any fault path with at least one fault and say that $k$ is the number of
faults on $f$. First note that $f$ does not occur in any $E(\mathcal{I}_R)$
for $r > k$. So it suffices to show that $f$ occurs exactly once in
\begin{equation*}
\sum_{r=1}^k (-1)^r \sum_{\mathcal{I}^{(k)}_R \subseteq \mathcal{I}^{(k)}:
\norm{\mathcal{I}^{(k)}_R} = r}E(\mathcal{I}^{(k)}_R).
\end{equation*}
Given $1 \leq r \leq k$, there are $\binom{k}{r}$ sets $\mathcal{I}^{(k)}_R
\subseteq \mathcal{I}^{(k)}$ of size $r$ that are a subset of the locations
at which $f$ has faults. Only for those $\mathcal{I}^{(k)}_R$, $f$ is counted
and for each of them it is counted exactly once. So the number of times $f$
is counted in this sum and hence in $B_k$ is
\begin{equation*}
\sum_{r=1}^k (-1)^{r-1} \binom{k}{r} = \sum_{r=0}^k (-1)^{r-1} \binom{k}{r}
+1 = -\Bigg(\sum_{r=0}^k (-1)^r \binom{k}{r} \Bigg) + 1 = 1,
\end{equation*}
where the last equality is a consequence of the binomial theorem. Now we can
upper bound $\snorm{B_k}$ as we did $\snorm{F}$ using Lemma \ref{ali89} as
\begin{align*}
\snorm{B_k} &\leq \sum_{r=1}^L \binom{L}{r} \big(\eta^{(k)} \big)^r = L \eta^{(k)}
  \sum_{r=1}^L \frac{1}{r} \binom{L-1}{r-1} \big(\eta^{(k)}\big)^{r-1}\\
&\leq L \eta^{(k)} \sum_{t=0}^{L-1} \binom{L-1}{t} \big(\eta^{(k)}\big)^t =
  L \eta^{(k)} (\eta^{(k)} + 1)^{L-1} \\
&\leq L \eta^{(k)} e^{(L-1)\eta^{(k)}},
\end{align*}
where the last equality is again a consequence of the binomial theorem and the
last inequality is because $\eta^{(k)} + 1 \leq e^{\eta^{(k)}}$.

It is a known fact that the $L_1$-distance between measurements of two states is
at most twice the Euclidean distance between those states. Therefore the
computation error $\delta$, which is the $L_1$-distance between the measurement
of the ideal circuit and that of $M_k$, can be bound by twice the maximum
Euclidean distance between the final state of the ideal circuit and that of
$M_k$. This distance in turn is at most $\snorm{B_k}$, therefore
\begin{equation*}
\delta \leq 2 \snorm{B_k} \leq 2 L \eta^{(k)} e^{(L-1)\eta^{(k)}} \leq 2 L
\eta^{(k)} e^{(L-1)\eta},
\end{equation*}
where the last inequality is because $\eta^{(k)} \leq \eta$ as we argued in
the proof of Lemma \ref{ali89}. We can rewrite this inequality to see that
we can pick $k$ such that
\begin{equation*}
2^k \geq \frac{\log \Bigg(\dfrac{2 L e^{(L-1)\eta}}{\binom{A}{2} \delta e^{(A-2)
\eta}} \Bigg)}{\log \Bigg( \dfrac{1}{\binom{A}{2} \eta e^{(A-2) \eta}} \Bigg)}
\end{equation*}
to achieve an error less than or equal to $\delta$.
\end{proof}
As with Theorem \ref{ali1}, $k$ scales at about $\log\log(1 /\delta)$. We might
improve the threshold by considering QECCs that can correct more that one error.
In \cite{longrange} Aharonov, Kitaev and Preskill prove a threshold result%
\footnote{In \cite{alicklong} Alicki provides comments on that result. In
\cite{alicklocal} he comments on the result by Terhal and Burkard
(\cite{terhal_2005}) that the work by Aliferis, Gottesman and Preskill
(\cite{aliferis_2005}) presented in this section is based on. These comments
mainly deal with physical objections to the proposed noise models. As a computer
scientist, this author's understanding of such objections is unfortunately too
limited to gauge their impact.}
for a non-Markovian noise model that allows interactions between arbitrary pairs
of qubits, even if they are not correlated by the circuit. They obtain their
result by bounding the norm of the interaction Hamiltonian by the inverse of
the (physical) distance between the qubits that it correlates. The details of
this result will not be discussed in detail in this paper.

%
%
\section{Objections to Fault-Tolerant Quantum Computing}
\label{object}
The first objection that comes to mind when considering FTQC is that we
may simply not be able to construct quantum computers that are shielded
from noise well enough so that the strength of the noise that makes it
through to the computer is below the proved threshold value. So the question
arises if bounds can be given for the noise strength above which FTQC is
impossible. For certain noise models such bounds have indeed been proved.
For example it was shown by Buhrman, Cleve, Laurent, Linden, Schrijver and
Unger in \cite{Buhrman06} that quantum computers cannot withstand
``depolarizing noise'' that hits with probability $p \approx 0.45$.
Depolarizing noise is noise that acts independently on single qubits and
replaces a qubit with the completely mixed state $\big(\begin{smallmatrix}
1/2 & 0 \\ 0 & 1/2 \end{smallmatrix}\big)$ with probability $p$ and leaves
it untouched with probability $1-p$. A bound of $p \approx 0.36$ was
later given by Kempe, Regev, Unger and de Wolf in \cite{Kempe08} for a
slightly weaker noise model.

This section is devoted to qualitative objections rather than such
quantitative ones. We have seen that any error on a single qubit can be
corrected by for example the Shor code. When a superposition of states is
exposed to such an error, it impacts all terms of the superposition in the
same way. We can also think of noise that only hits certain terms in a
superposition, while leaving others undisturbed. Such noise we call
\indef{controlled noise}, because it is applied only to terms that meet
certain conditions. In this section we will deal with two types of controlled
noise; controlled bit flips ($X$ errors) and controlled phase flips ($Z$ errors).
Ben-Aroya and Ta-Shma show in \cite{ben-aroya_2009} that controlled bit flips
cannot be perfectly corrected, but can be \emph{approximately} corrected.
On the other hand, they also show that controlled phase flips cannot even
be approximately corrected. We will treat these subjects in that order.

\subsection{Controlled bit flips cannot be perfectly corrected}
\label{noflip}
The well known CNOT gate, for `controlled not', acts on two qubits and flips
the second only if the first is $\ket{1}$. We generalize this definition,
allowing controlled bit flips to be conditioned on any number of qubits in
the state, in any combination. We let $[n]$ denote the set $\{1, \ldots, n\}$
and say that for any $i \in [n]$ and $S \subseteq \{0,1\}^{n-1}$ the operator
$E_{i, S}$ applies $X$ to the $i$'th qubit conditioned on the other qubits
being in $S$. More formally if $x \in \{0,1\}^n$ we define $x_{-i} \in
\{0,1\}^{n-1}$ to be $x_1, \ldots, x_{i-1}, x_{i+1}, \ldots, x_n$. We let
$X^i$ be the operator that flips the $i$'th qubit of an $n$-qubit state,
i.e., $I^{\otimes (i-1)} \otimes X \otimes I^{\otimes (n-i)}$. Now we say
that for $i \in [n]$ and $S \subseteq \{0,1\}^{n-1}$,
\begin{equation*}
E_{i, S}\ket{x} = \begin{cases}
X^i \ket{x} &\textrm{if } x_{-i} \in S \\
\ket{x} &\textrm{otherwise}.
\end{cases}
\end{equation*}
To obtain the full definition we extend the above one linearly.

We are now ready to define the classes of errors of interest to us. We start
with $\mathcal{E}_\textbf{cbit}$ for \indef{controlled bit flips}, which we
define as
\begin{equation*}
\mathcal{E}_\textbf{cbit} := \{E_{i,S} \mid i \in [n], S \subseteq
\{0,1\}^{n-1}\}.
\end{equation*}
For the proof that controlled bit flips cannot be perfectly corrected we
restrict our attention to a subset of errors called $\mathcal{E}_\textbf{
singletons}$, defined by
\begin{equation*}
\mathcal{E}_\textbf{singletons} := \{E_{i,\{s\}} \mid i \in [n], s \in
\{0,1\}^{n-1}\}. 
\end{equation*}

To be able to rigorously formulate our theorems we must first have a proper
definition for what it means that a QECC is able to correct an error. We let
$L(\mathcal{N}, \mathcal{N}')$ denote the set of all linear operations from
the space $\mathcal{N}$ to the space $\mathcal{N}' \subseteq \mathcal{N}$.
\begin{definition}
\label{def21}
A QECC $\mathcal{M}$ corrects $\mathcal{E} \subset L(\mathcal{N},
\mathcal{N}')$ if for any two operators $A,B \in \mathcal{E}$ and any two
codewords $\phi,\psi \in \mathcal{M}$,
\begin{equation*}
\braket{\phi}{\psi} = 0 \Rightarrow \bra{\phi}A^*B\ket{\psi} = 0.
\end{equation*}
\end{definition}

During the proof of the first theorem we will make use of the following lemma
which gives us a necessary condition for a QECC being able to correct an
error. This lemma is a weaker version of Fact 2.1 in \cite{ben-aroya_2009}.
\begin{lemma}
\label{fact21}
If a code $\mathcal{M}$ corrects $\mathcal{E} \subset L(\mathcal{N},
\mathcal{N'})$, then for any two operators $A,B \in \mathcal{E}$ and any two
codewords $\phi,\psi \in \mathcal{M}$,
\begin{equation*}
\bra{\phi}A^*B\ket{\phi} = \bra{\psi}A^*B\ket{\psi} . 
\end{equation*}
\end{lemma}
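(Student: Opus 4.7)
The plan is to leverage Definition \ref{def21}, which constrains only the \emph{off}-diagonal matrix elements $\bra{\phi}A^*B\ket{\psi}$ between orthogonal codewords, and to extract from it the claimed equality of \emph{diagonal} matrix elements via a polarization-style superposition trick. Throughout I treat $\phi$ and $\psi$ as unit codewords; this is the only sensible reading of the statement, since a trivial rescaling $\psi \mapsto c\psi$ would otherwise falsify it.

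First I would dispose of the restricted case where $\phi$ and $\psi$ are themselves orthogonal. Since $\mathcal{M}$ is a linear subspace, the vectors $\chi^{\pm} := (\phi \pm \psi)/\sqrt{2}$ are again codewords, and a direct inner-product computation gives $\braket{\chi^+}{\chi^-} = 0$. Applying Definition \ref{def21} to this new orthogonal pair yields $\bra{\chi^+}A^*B\ket{\chi^-} = 0$, and expanding the bilinear form gives
\[
\tfrac{1}{2}\bigl(\bra{\phi}A^*B\ket{\phi} - \bra{\phi}A^*B\ket{\psi} + \bra{\psi}A^*B\ket{\phi} - \bra{\psi}A^*B\ket{\psi}\bigr) = 0.
\]
The two middle terms already vanish by a second application of Definition \ref{def21} to the pair $(\phi,\psi)$, leaving exactly $\bra{\phi}A^*B\ket{\phi} = \bra{\psi}A^*B\ket{\psi}$.

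Second, I would bootstrap to the general case. Pick an orthonormal basis $\{e_1,\ldots,e_d\}$ of $\mathcal{M}$ with $e_1 = \phi$ and expand $\psi = \sum_i c_i e_i$. Then
\[
\bra{\psi}A^*B\ket{\psi} = \sum_{i,j}\overline{c_i}c_j\bra{e_i}A^*B\ket{e_j},
\]
where every off-diagonal summand vanishes by Definition \ref{def21} applied to the pairwise orthogonal pair $(e_i,e_j)$, and the first step applied to each pair $(e_1,e_i)$ forces $\bra{e_i}A^*B\ket{e_i} = \bra{\phi}A^*B\ket{\phi}$ for every $i$. Since $\sum_i |c_i|^2 = 1$, the remaining diagonal sum collapses to $\bra{\phi}A^*B\ket{\phi}$, which establishes the lemma.

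There is no real technical obstacle here; the only mild care needed is to ensure an orthogonal partner for $\phi$ actually exists inside $\mathcal{M}$, which is automatic as soon as $\dim\mathcal{M}\geq 2$. In the degenerate case $\dim\mathcal{M}=1$ every unit codeword equals $\phi$ up to a global phase, so the claim is immediate. The conceptual heart of the argument is the polarization identity $\chi^{\pm}=(\phi\pm\psi)/\sqrt 2$: it converts a diagonal statement into an off-diagonal one, which is precisely what Definition \ref{def21} controls.
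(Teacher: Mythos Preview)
Your proof is correct and follows essentially the same route as the paper: the polarization trick $\chi^{\pm}=(\phi\pm\psi)/\sqrt 2$ for an orthogonal pair, followed by the expansion of a general unit codeword in an orthonormal basis, is exactly what the paper does (it just omits the normalization of $\chi^{\pm}$ and phrases the first step for basis vectors). Your explicit handling of the unit-vector convention and the $\dim\mathcal{M}=1$ case is a welcome addition.
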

\begin{proof}
Let $A,B$ be any two operators in $\mathcal{E}$ and let $\phi_1, \phi_2$ be
two basis vectors of $\mathcal{M}$. If $\phi_1 = \phi_2$ then we already have
$\bra{\phi_1}A^*B\ket{\phi_1} = \bra{\phi_2}A^*B\ket{\phi_2}$, so assume
$\phi_1 \neq \phi_2$. Since these are basis vectors we have that $\phi_1 +
\phi_2$ and $\phi_1 - \phi_2$ are codewords themselves and also orthogonal
(we omit the normalization for brevity). By Definition \ref{def21} we thus
have that
\begin{equation*}
\bra{\phi_1 + \phi_2}A^*B\ket{\phi_1 - \phi_2} = 0
\end{equation*}
and from that
\begin{equation*}
\bra{\phi_1}A^*B\ket{\phi_1} - \bra{\phi_1}A^*B\ket{\phi_2} +
  \bra{\phi_2}A^*B\ket{\phi_1} - \bra{\phi_2}A^*B\ket{\phi_2} = 0.
\end{equation*}
Again from Definition \ref{def21} we have that $\bra{\phi_1}A^*B\ket{\phi_2}
= 0 = \bra{\phi_2}A^*B\ket{\phi_1}$. Thus we obtain
\begin{equation*}
\bra{\phi_1}A^*B\ket{\phi_1} = \bra{\phi_2}A^*B\ket{\phi_2}.
\end{equation*}
Now let $\{\phi_i\}$ be some orthonormal basis of $\mathcal{M}$ and let $\phi
= \sum_i \alpha_i \ket{\phi_i}$ be some codeword in $\mathcal{M}$. We compute
that
\begin{equation*}
\bra{\phi}A^*B\ket{\phi} = (\sum_i \alpha_i^* \bra{\phi_i})A^*B(\sum_i
  \alpha_i \ket{\phi_i})
= \sum_{i,j} \alpha_i^*\alpha_j \bra{\phi_i}A^*B\ket{\phi_j}.
\end{equation*}
Note that for $i \neq j$ we have $\braket{\phi_i}{\phi_j} = 0 =
\bra{\phi_i}A^*B\ket{\phi_j}$ by Definition \ref{def21}, so we are left with
\begin{equation*}
\bra{\phi}A^*B\ket{\phi} = \sum_i \alpha_i^* \alpha_i \bra{\phi_i}A^*B
\ket{\phi_i} = \sum_i \norm{\alpha_i}^2 \bra{\phi_i}A^*B\ket{\phi_i}.
\end{equation*}
As we have shown $\bra{\phi_i}A^*B\ket{\phi_i} = \bra{\phi_j}A^*B\ket{\phi_j}$
for any two basis vectors $\phi_i, \phi_j$ of $\mathcal{M}$. Therefore letting
$\phi_1$ be some arbitrary basis vector of $\mathcal{M}$ we now arrive at
\begin{equation*}
\bra{\phi}A^*B\ket{\phi} = \bra{\phi_1}A^*B\ket{\phi_1} \cdot \sum_i
\norm{\alpha_i}^2 = \bra{\phi_1}A^*B\ket{\phi_1},
\end{equation*}
because $\phi$ is a codeword and thus assumed to be unitary.
\end{proof}

Now we are ready for the first negative result. We remind the reader that
a QECC $\mathcal{M}$ with $\dim(\mathcal{M}) = 2^0$ is a space consisting
of a single vector. By definition this means that such a code cannot be
used to encode two different qubits, say $\ket{0}$ and $\ket{1}$, because
their encoded versions would be identical, i.e., $\ket{\overline{0}} =
\ket{\overline{1}}$, and neither is recoverable. In particular such a QECC
would not be able to correct any errors and as such cannot play a role in
bringing about fault-tolerant quantum computing.
\begin{theorem}[{\cite[Theorem 3.1]{ben-aroya_2009}}]
\label{nobit}
There is no QECC $\mathcal{M}$ with $\dim(\mathcal{M}) > 2^0$ that can correct
$\mathcal{E}_\textbf{singletons}$.
\end{theorem}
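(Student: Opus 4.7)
The plan is to combine the diagonal constraint of Lemma \ref{fact21} with the orthogonality condition of Definition \ref{def21} on a carefully chosen family of singleton products $A^*B$, and show that the resulting linear conditions on codeword amplitudes are so restrictive that any two orthonormal codewords would have to be proportional, a contradiction.

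Suppose for contradiction that $\dim(\mathcal{M}) \geq 2$, and pick orthonormal codewords $\phi, \psi \in \mathcal{M}$. For each $i \in [n]$ and $s \in \{0,1\}^{n-1}$, let $\ket{u_s^i}$ and $\ket{v_s^i}$ be the two basis states whose qubits other than $i$ equal $s$ and whose $i$-th qubit is $0$ or $1$ respectively, so that $E_{i,\{s\}}$ is a Hermitian involution that swaps $\ket{u_s^i}$ with $\ket{v_s^i}$ and fixes the rest. For $s \ne s'$ the product $A^*B = E_{i,\{s\}}E_{i,\{s'\}}$ is then a disjoint double transposition, and writing $\delta_s^i(\phi) := \phi_{u_s^i} - \phi_{v_s^i}$ (similarly for $\psi$) a routine expansion gives
\begin{align*}
\bra{\phi}A^*B\ket{\phi} &= 1 - |\delta_s^i(\phi)|^2 - |\delta_{s'}^i(\phi)|^2,\\
\bra{\phi}A^*B\ket{\psi} &= -\overline{\delta_s^i(\phi)}\,\delta_s^i(\psi) - \overline{\delta_{s'}^i(\phi)}\,\delta_{s'}^i(\psi),
\end{align*}
where the second identity uses $\braket{\phi}{\psi} = 0$.

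Lemma \ref{fact21} and Definition \ref{def21} therefore yield, for each $i$ and every $s \ne s'$, the antisymmetric pair of relations
\[
|\delta_s^i(\phi)|^2 + |\delta_{s'}^i(\phi)|^2 = |\delta_s^i(\psi)|^2 + |\delta_{s'}^i(\psi)|^2, \qquad \overline{\delta_s^i(\phi)}\delta_s^i(\psi) + \overline{\delta_{s'}^i(\phi)}\delta_{s'}^i(\psi) = 0.
\]
Once $n \geq 3$ there are at least three distinct values of $s \in \{0,1\}^{n-1}$, and comparing three such equations on any triple pins down $|\delta_s^i(\phi)| = |\delta_s^i(\psi)|$ and $\overline{\delta_s^i(\phi)}\delta_s^i(\psi) = 0$ individually for every $s$; these two conclusions together force $\delta_s^i(\phi) = \delta_s^i(\psi) = 0$ for all $i, s$. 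Hence both codewords assign equal amplitude to any two basis states that are adjacent in the Boolean hypercube, which is connected, so each is proportional to the uniform superposition and $\phi$ cannot be orthogonal to $\psi$. The boundary case $n = 2$, in which only one pair $(s, s')$ is available per direction $i$ (so that $E_{i,\{0\}}E_{i,\{1\}} = X^i$), can be settled separately using also the 3-cycle products $E_{i,\{s\}}E_{j,\{t\}}$ for $i \ne j$, or by direct case analysis on how a $2$-dimensional subspace of $\mathbb{C}^4$ can meet the Knill--Laflamme condition for both $X^1$ and $X^2$. The step I expect to be most delicate is locating the right family of singleton products; once the double transpositions $E_{i,\{s\}}E_{i,\{s'\}}$ are chosen, the rest reduces to clean antisymmetric linear algebra in the edge-amplitude variables.
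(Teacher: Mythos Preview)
Your argument is correct for $n \geq 3$ and reaches the same endpoint as the paper, but by a longer path. The paper does not pair two nontrivial singletons; it simply takes $A = I$ and $B = E_{i,\{s\}}$ (the identity being tacitly admitted as an error, as is customary in the Knill--Laflamme setting), so that $A^*B$ is a \emph{single} transposition. One then reads off directly
\[
\bra{\phi}A^*B\ket{\phi} = 1 - |\delta_s^i(\phi)|^2, \qquad \bra{\phi}A^*B\ket{\psi} = -\,\overline{\delta_s^i(\phi)}\,\delta_s^i(\psi),
\]
and the per-edge conditions $|\delta_s^i(\phi)| = |\delta_s^i(\psi)|$ and $\overline{\delta_s^i(\phi)}\,\delta_s^i(\psi) = 0$ fall out immediately, one $s$ at a time, with no three-term linear system and no separate treatment of $n = 2$. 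What your route buys is that it stays literally inside $\mathcal{E}_{\mathbf{singletons}}$ as defined and never invokes $I$ as an element of the error set; but once that standard convention is granted, the single-transposition argument is shorter, uniform in $n$, and avoids the residual case analysis you flag at the end.
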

\begin{proof}
Assume there is a QECC $\mathcal{M}$ with $\dim(\mathcal{M}) > 2^0$ that can
correct $\mathcal{E}_\textbf{singletons}$ and let $\phi = \sum_{i \in \{0,1\}^n}
\phi(i) \ket{i}$ and $\psi = \sum_{i \in \{0, 1\}^n} \psi(i) \ket{i}
$ be two
orthonormal codewords of $\mathcal{M}$. Fix $i \in [n]$ and $q \in \{0, 1\}^n$.
Denote $E = I^*E_{i, \{q_{-i}\}}
$ and $q' = q \oplus e_i$.

We can now compute that
\begin{align*}
\bra{\phi}E\ket{\phi} &= \bra{\phi}(\ket{\phi} + (\phi(q')-\phi(q))\ket{q} +
  (\phi(q)-\phi(q')) \ket{q'}) \\
&= \braket{\phi}{\phi} + \bra{\phi}((\phi(q')-\phi(q))\ket{q} + (\phi(q) -
  \phi(q')) \ket{q'}) \\
&= 1 + \phi(q)^*(\phi(q')-\phi(q)) + \phi(q')^*(\phi(q)-\phi(q')) \\
&= 1 - \phi(q)^*(\phi(q)-\phi(q')) + \phi(q')^*(\phi(q)-\phi(q')) \\
&= 1 - (\phi(q)^* - \phi(q')^*)(\phi(q) - \phi(q')) \\
&= 1 - (\phi(q) - \phi(q'))^*(\phi(q) - \phi(q')) \\
&= 1 - \norm{\phi(q) - \phi(q')}^2.
\end{align*}
Analogously we obtain that $\bra{\psi}E\ket{\psi} = 1 - \norm{\psi(q) -
\psi(q')}^2$. Furthermore
\begin{align*}
\bra{\phi}E\ket{\psi} &= \bra{\phi}E\ket{\psi} - \braket{\phi}{\psi} \\
&= \bra{\phi}(E\ket{\psi} - \ket{\psi}) \\
&= \bra{\phi}((\psi(q') - \psi(q))\ket{q} + (\psi(q) - \psi(q'))\ket{q'}) \\
&= \phi(q)^*(\psi(q') - \psi(q)) + \phi(q')^*(\psi(q) - \psi(q')) \\
&= -\phi(q)^*(\psi(q) - \psi(q')) + \phi(q')^*(\psi(q) - \psi(q')) \\
&= -(\phi(q)^* - \phi(q')^*)(\psi(q) - \psi(q')) \\
&= -(\phi(q) - \phi(q'))^*(\psi(q) - \psi(q')).
\end{align*}

Because $\braket{\phi}{\psi} = 0$ it follows from Definition \ref{def21} that
$\bra{\phi}E\ket{\psi} = 0$ as well. So either $(\phi(q) - \phi(q')) = 0$ or
$(\psi(q) - \psi(q'))=0$. In case of the former we have that $\phi(q) =
\phi(q')$. In case of the latter suppose for contradiction that $\phi(q) \neq
\phi(q')$, then $\psi(q) = \psi(q')$. From this we may conclude that
$\bra{\psi}E\ket{\psi} = \braket{\psi}{\psi} = 1$, while $\bra{\phi}E\ket{\phi}
\neq 1$, which contradicts Lemma \ref{fact21}, letting $A = I$ and $B = E_{i,
\{q_{-i}\}}
$. So for any codeword $\phi$ and any $i \in [n]$ we have that
$\phi(q) = \phi(q \oplus e_i)$. So all codewords are completely uniform and
hence there is only one codeword, up to multiplication by a scalar. This
contradicts the assumption that there is a QECC $\mathcal{M}$ with $\dim(
\mathcal{M}) > 1$ that can correct $\mathcal{E}_\textbf{singletons}$.
\end{proof}

We can give the following intuition for this result. Consider the state
$\ket{\phi} = \frac{1}{\sqrt{2}}(\ket{00} + \ket{11})$, i.e., the EPR pair, and
let $E = E_{1,\{0\}}
$. We will need at least 2 ancilla qubits to write down
the error syndrome for this state. An $X$ error could hit either qubit and
it is possible that no error occurs. We let $\ket{\textsf{synd}(X^1)}$ denote
the state of the ancillas when an $X$ error has been detected on the first
qubit and $\ket{\textsf{synd}(I)}$ denote that when no error has occurred. Now
$E\ket{\phi} = \frac{1}{\sqrt{2}}(\ket{10}\otimes\ket{\textsf{synd}(X^1)} +
\ket{11}\otimes\ket{\textsf{synd}(I)})$. Measuring the syndrome will now cause
the state to collapse to either $\ket{10}\otimes \ket{\textsf{synd}(X^1)}$ or
$\ket{11}\otimes\ket{\textsf{synd}(I)}$ and after error-correction we are left
with either $\ket{00}$ or $\ket{11}$.

This is very different indeed from the case where the error itself is a
linear combination of Pauli errors, for example let $E = \frac{1}{\sqrt{2}}(X^1
+ Z^2)$ and consider $E\ket{\phi}$. Measuring the syndrome makes the state
collapse to either $X^1\ket{\phi}\otimes\ket{\mathsf{synd}(X^1)}$ or
$Z^2\ket{\phi}\otimes \ket{\textsf{synd}(Z^2)}$. In either case $\phi$ is
recoverable by applying $X^1$ or $Z^2$ respectively.

%
%
\subsection{Controlled bit flips can be approximately corrected}
\label{approxflip}
Even though we have seen that controlled bit flips cannot be perfectly
corrected, not all is lost. Fortunately Ben-Aroya and Ta-Shma also proved in
\cite{ben-aroya_2009} that they can be \emph{approximately} corrected. This result
provides a positive intermezzo in a section otherwise devoted to negative
results. To present the proof we must first define what we mean by a QECC
approximately correcting an error.
\begin{definition}
Given a QECC $\mathcal{M}$ and $\mathcal{E} \subset L(\mathcal{N},
\mathcal{N}')$ we say that $\mathcal{M}$ is $(\mathcal{E}, \epsilon)$ immune if
for any operator $A \in \mathcal{E}$ and any $\phi \in \mathcal{M}$,
\begin{equation*}
\norm{\bra{\phi}A\ket{\phi}} \geq (1-\epsilon) \braket{\phi}{\phi},
\end{equation*}
where we call $\epsilon$ the approximation error.
\end{definition}
The motivation for this definition is that for small enough $\epsilon$,
$A\ket{\phi} \approx \ket{\phi}$, so there is no need for any kind of active
error correction. Our goal will be to show that there is a QECC $\mathcal{M}$
with high dimension that can approximately correct $\mathcal{E}_\textbf{cbit}$
with a low approximation error. We have seen in the proof of Theorem \ref{nobit}
that to \emph{perfectly} correct controlled bit flips the vectors in the
code space must be very uniform. Unfortunately to make them uniform enough
to perfectly correct the errors means limiting $\dim(\mathcal{M})$ to 1. Our
job will therefore be to conduct a careful balancing act between the uniformity
of the vectors in the code space and the dimension of the QECC.

To reason about the uniformity of vectors we first look at the \indef{influence
of variables on functions}. For a function $f: \{0,1\}^m \to \mathbb{C}$ we can
ask what the influence of a particular bit of the input is on the output. We
define the influence of the $i$'th bit as $I_i(f) = \pex_{x \in \{0,1\}^m}
\norm{f(x) - f(x \oplus e_i)}^2$, where $e_i$ is the $i$'th vector in the
standard basis. The \indef{influence of a function} is then defined as $I(f) =
\max_{i \in [m]}I_i(f)$. We can view $f$ as a vector $\sum_{x \in \{0,1\}^m}
f(x)\ket{x}$. Observe that a low influence of $f$ corresponds to uniformity
of the vector representation of $f$.

Now we begin the construction of our QECC $\mathcal{M}$. Let $B$ be an
integer such that $2B$ divides $n$ and define $n' = \frac{n}{2B}$. Fix some
balanced function $f: \{0,1\}^{n'} \to \{\pm \frac{1}{2}\}$ with low
influence $s(n')$. Balanced here means that $\sum_{x \in \{0,1\}^{n'}}
f(x) = 0$. Thus we have $I(f) = s(n')$. We can view a bitstring $x$ of length
$n$ as a sequence $x_1\cdots x_B$ of $B$ strings of length $2n'$. We further
subdivide each such string into two strings of length $n'$ so that we may write
$x = x_{1,0} x_{1,1}\cdots x_{B,0}x_{B,1}$. Let $z$ be some string in
$\{0,1\}^B$. We now define $f_z: \{0,1\}^n \to \mathbb{C}$ by $f_z(x) =
\prod_{k=1}^B f(x_{k,z_k})$ and define our code $\mathcal{M} := \textrm{Span}
\{f_z \mid z \in \{0,1\}^B\}$. We are now ready to formally state the theorem.

\begin{theorem}[{\cite[Theorem 4.1]{ben-aroya_2009}}]
$\mathcal{M}$ is an $\llbracket n,B \rrbracket$ QECC that is $(\mathcal{
E_\textbf{cbit}}, 2s(n'))$ immune.
\end{theorem}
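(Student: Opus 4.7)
The plan is to verify the dimension claim and the immunity bound in turn. The hard part will be the latter, where the influence $I_i(\phi)$ of an arbitrary codeword must be controlled by $s(n')$, the influence of the single seed function $f$. The construction is designed precisely so that flipping bit $i$ activates only one $f$-factor of each $f_z$, and the block structure lets us separate that factor cleanly from the rest.

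For the dimension, I would compute the pairwise inner products $\langle f_z, f_{z'}\rangle$; these factor as products over the $B$ blocks. A block $k$ with $z_k = z'_k$ contributes $2^{n'}\sum_y \norm{f(y)}^2 = 2^{2n'}/4$, while a block with $z_k \neq z'_k$ contributes $(\sum_y f(y))(\sum_y \overline{f(y)}) = 0$ by balancedness of $f$. Hence $\{f_z\}_z$ is an orthogonal family of $2^B$ nonzero vectors of common squared norm $2^n/4^B$, giving $\dim(\mathcal{M}) = 2^B$ and the identity $\braket{\phi}{\phi} = (2^n/4^B)\sum_z\norm{\alpha_z}^2$ for $\phi = \sum_z \alpha_z f_z$.

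For immunity, fix $\phi \in \mathcal{M}$ and $E = E_{i,S}$. Mirroring the telescoping calculation in the proof of Theorem \ref{nobit}, pairing each $y$ with $y_i = 0$ and $y_{-i} \in S$ to $y \oplus e_i$ yields
\begin{equation*}
\bra{\phi}E\ket{\phi} = \braket{\phi}{\phi} - \sum_{y \in T}\norm{\phi(y) - \phi(y \oplus e_i)}^2,
\end{equation*}
where $T = \{y : y_i = 0,\ y_{-i} \in S\}$. Enlarging $T$ to $\{y : y_i = 0\}$ and recognizing the resulting sum as $2^{n-1}I_i(\phi)$, immunity reduces to the estimate $2^{n-1}I_i(\phi) \leq 2s(n')\braket{\phi}{\phi}$.

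For the crux, expand $\phi = \sum_z \alpha_z f_z$ and let bit $i$ lie in sub-block $(k(i), j(i))$ at internal position $i'$. The construction forces $f_z(x) - f_z(x \oplus e_i)$ to vanish unless $z_{k(i)} = j(i)$, and for such $z$ it factors as $(f(x_{k(i),j(i)}) - f(x_{k(i),j(i)} \oplus e_{i'}))\cdot h_z(x)$, where $h_z(x) := \prod_{k \neq k(i)}f(x_{k,z_k})$ depends on coordinates disjoint from sub-block $(k(i),j(i))$. Under uniform $x$ the two factors are independent, so
\begin{equation*}
I_i(\phi) = I_{i'}(f)\cdot\pex_{x}\Bigl\lvert\sum_{z:\,z_{k(i)} = j(i)}\alpha_z h_z(x)\Bigr\rvert^2,
\end{equation*}
and applying the orthogonality computation of the second paragraph to $\{h_z\}$ (a family of $2^{B-1}$ orthogonal vectors living on $B-1$ blocks) bounds the expectation by $(1/4^{B-1})\sum_z\norm{\alpha_z}^2$. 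Combining with $I_{i'}(f) \leq s(n')$ and the norm identity for $\braket{\phi}{\phi}$ yields $2^{n-1}I_i(\phi) \leq 2s(n')\braket{\phi}{\phi}$, closing the argument.
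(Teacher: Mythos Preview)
Your proposal is correct and follows essentially the same architecture as the paper: orthogonality of $\{f_z\}$ via balancedness for the dimension, then the reduction of immunity to the influence bound $2^{n-1}I_i(\phi)\le 2s(n')\braket{\phi}{\phi}$, and finally the block-factorisation of $\phi(x)-\phi(x\oplus e_i)$ to isolate the single active $f$-factor. The one substantive difference is in the first reduction: the paper's Lemma~4.3 bounds $\lvert\bra{h}E_{i,S}\ket{g}-\braket{h}{g}\rvert$ for \emph{two} vectors via Cauchy--Schwarz, whereas you exploit the Hermiticity of $E_{i,S}$ in the diagonal case to get the exact identity $\bra{\phi}E_{i,S}\ket{\phi}=\braket{\phi}{\phi}-\sum_{y\in T}\lvert\phi(y)-\phi(y\oplus e_i)\rvert^2$, which is a clean and more elementary route to the same $2^{n-1}I_i(\phi)$ bound when only $h=g=\phi$ is needed.
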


To show that $\dim(\mathcal{M}) = 2^B$ it suffices to prove that $\{f_z \mid z
\in \{0,1\}^B\}$ is an orthogonal set. So let $z, z' \in \{0,1\}^B$ with $z
\neq z'$. We need to show that $\braket{f_z}{f_{z'}} = 0$. By our choice of
$f$ we know that $f(x_{i, z_i})$ is balanced over $\{\pm \frac{1}{2}\}$. The
definition of $f_z$ as a product of $f$s then tells us that $f_z$ is balanced
over $\{\pm 2^{-B}\}$. Since $z \neq z'$ there is some $k \in [B]$ such that
$z_k \neq z_k'$. Now observe that $f(x_{k, z_k})$ and $f(x_{k, z_k'})$ depend
on non-overlapping substrings of $x$ and hence their values are independent
and uniform over $\{\pm \frac{1}{2}\}$. Therefore the pair $(f_z(x), f_{z'}(
x))$ is uniform over $(\pm 2^{-B}, \pm 2^{-B})$ and so
\begin{equation*}
\braket{f_z}{f_{z'}} = \sum_{x \in \{0,1\}^n} f_z(x)^* f_{z'}(x) = 0,
\end{equation*}
as desired.

To prove the theorem we still need to show that for all $A \in
\mathcal{E}_\textbf{cbit}$ and all $\phi \in \mathcal{M}$ we have $\norm{
\bra{\phi}A\ket{\phi}} \geq (1-2s(n')) \braket{\phi}{\phi}$. Note that it
suffices to prove that $\norm{\bra{\phi}A\ket{\phi} - \braket{\phi}{\phi}}
\leq 2s(n') \norm{\braket{\phi}{\phi}}$, because
\begin{align*}
\norm{\bra{\phi}A\ket{\phi} - \braket{\phi}{\phi}} &\leq 2s(n')
  \norm{\braket{\phi}{\phi}} &\Rightarrow \\
\norm{\braket{\phi}{\phi}} - \norm{\bra{\phi}A\ket{\phi}} &\leq 2s(n')
  \norm{\braket{\phi}{\phi}}  &\Leftrightarrow \\
\norm{\bra{\phi}A\ket{\phi}} - \norm{\braket{\phi}{\phi}} &\geq -2s(n')
  \norm{\braket{\phi}{\phi}}  &\Leftrightarrow \\
\norm{\bra{\phi}A\ket{\phi}} &\geq \norm{\braket{\phi}{\phi}} - 2s(n')
  \norm{\braket{\phi}{\phi}}  &\Leftrightarrow \\
\norm{\bra{\phi}A\ket{\phi}} &\geq (1-2s(n')) \norm{\braket{\phi}{\phi}}.
\end{align*}
The first implication is a consequence of the reverse triangle inequality,
i.e., $\norm{\braket{\phi}{\phi}} - \norm{\bra{\phi}A\ket{\phi}} \leq
\norm{\braket{\phi}{\phi} - \bra{\phi}A\ket{\phi}}$, which in turn equals
$\norm{\bra{\phi}A\ket{\phi} - \braket{\phi}{\phi}}$. We shall prove the
first inequality and hence the theorem using the following two lemmas.

\begin{lemma}[{\cite[Lemma 4.3]{ben-aroya_2009}}]
For every $\phi \in \mathcal{M}$, $A \in \mathcal{E}_\textbf{cbit}$ and $i
\in [n]$,
\begin{equation*}
\norm{\bra{\phi}A\ket{\phi} - \braket{\phi}{\phi}} \leq 2^{n-1}I_i(\phi).
\end{equation*}
\end{lemma}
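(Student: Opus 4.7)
The plan is to unpack the action of $A = E_{i,S}$ on a generic codeword written in the computational basis, cancel the diagonal terms against $\braket{\phi}{\phi}$, pair the surviving off-diagonal terms, and recognize the result as a truncated influence sum.

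Writing $\phi = \sum_{x \in \{0,1\}^n} \phi(x)\ket{x}$, I would first expand $A\ket{\phi} = \sum_{x: x_{-i} \notin S} \phi(x)\ket{x} + \sum_{x: x_{-i} \in S} \phi(x)\ket{x \oplus e_i}$ directly from the definition of $E_{i,S}$. Taking the inner product with $\bra{\phi}$ and subtracting $\braket{\phi}{\phi} = \sum_x |\phi(x)|^2$, the contribution from indices with $x_{-i} \notin S$ cancels, leaving
\begin{equation*}
\bra{\phi}A\ket{\phi} - \braket{\phi}{\phi} = \sum_{x: x_{-i} \in S} \phi(x)\bigl(\phi(x\oplus e_i) - \phi(x)\bigr)^{*}.
\end{equation*}

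The key observation is that the indexing set $\{x : x_{-i} \in S\}$ is closed under $x \mapsto x \oplus e_i$, so I would group the sum into pairs $(y, y\oplus e_i)$ with $y_i = 0$ and $y_{-i} \in S$. Each such pair contributes
\begin{equation*}
\phi(y)\bigl(\phi(y\oplus e_i)-\phi(y)\bigr)^{*} + \phi(y\oplus e_i)\bigl(\phi(y)-\phi(y\oplus e_i)\bigr)^{*} = -\bigl|\phi(y) - \phi(y\oplus e_i)\bigr|^{2},
\end{equation*}
via the identity $2\Re(ab^{*}) - |a|^{2} - |b|^{2} = -|a-b|^{2}$. Hence the difference $\bra{\phi}A\ket{\phi} - \braket{\phi}{\phi}$ is real and nonpositive, and its absolute value equals $\sum_{y: y_i = 0, y_{-i} \in S} |\phi(y) - \phi(y\oplus e_i)|^{2}$.

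To finish, I would drop the constraint $y_{-i} \in S$ (which only enlarges the sum) and relate the result to $I_i(\phi)$: since every pair $\{y, y \oplus e_i\}$ is counted twice in the definition $I_i(\phi) = 2^{-n}\sum_{x}|\phi(x) - \phi(x\oplus e_i)|^{2}$, one has $\sum_{y: y_i=0}|\phi(y) - \phi(y\oplus e_i)|^{2} = 2^{n-1} I_i(\phi)$, giving exactly the claimed bound. The argument is essentially bookkeeping; the only subtlety to guard against is the pairing step, where one must make sure that when $x_{-i} \in S$ both $x$ and $x\oplus e_i$ lie in the restricted index set so that the off-diagonal cross-terms combine cleanly into a squared modulus rather than leaving residual cross terms that would force a wasteful Cauchy--Schwarz bound.
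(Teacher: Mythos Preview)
Your proof is correct and follows the same decomposition as the paper: expand $E_{i,S}\ket{\phi}$, subtract $\braket{\phi}{\phi}$, and pair terms over the fibers $\{y,y\oplus e_i\}$. The one difference is that the paper carries out the computation for the bilinear form $\bra{h}E_{i,S}\ket{g}-\braket{h}{g}$, factors it as $\sum_{y\in S}(h(y,0)-h(y,1))^{*}(g(y,0)-g(y,1))$, and then invokes Cauchy--Schwarz before specializing to $h=g=\phi$; you instead stay in the quadratic case from the start and observe that the paired contribution is exactly $-|\phi(y)-\phi(y\oplus e_i)|^{2}$, so no Cauchy--Schwarz is needed (indeed it would be an equality here). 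Your route additionally reveals that $\bra{\phi}A\ket{\phi}-\braket{\phi}{\phi}$ is real and nonpositive, which the paper's argument does not make explicit; the paper's route, on the other hand, yields the off-diagonal bound $|\bra{h}E_{i,S}\ket{g}-\braket{h}{g}|\le\sqrt{2^{n-1}I_i(h)}\sqrt{2^{n-1}I_i(g)}$ for free.
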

\begin{proof}
We begin by observing that to prove the lemma for all $A \in
\mathcal{E}_\textbf{cbit}$ we must show it for all error operators $E_{i, S}$,
where $i \in [n]$ and $S \subseteq \{0,1\}^n$. For the remainder of this
proof we will treat vectors $\phi, \psi \in \mathcal{M}$ as functions $h,g :
\{0,1\}^n \to \mathbb{C}$. Fixing some $i$ and $S$ we have
\begin{align*}
\bra{h}E_{i,S}\ket{g} &= \sum_{x: x_{-i} \not\in S} h(x)^*g(x) +
  \sum_{x: x_{-i} \in S} h(x)^*g(x \oplus e_i) \\
&= \sum_{x \in \{0,1\}^n} h(x)^*g(x) + \sum_{x: x_{-i} \in S}
  \big(h(x)^*g(x \oplus e_i) - h(x)^*g(x) \big).
\end{align*}
We now write $x \in \{0,1\}^n$ as a tuple $(x_{-i},x_i)$. This gives us
\begin{align*}
\norm{\bra{h}E_{i, S}\ket{g} - \braket{h}{g}} &= \norm{\sum_{y \in S} \big(
  h(y, 0)^*g(y,0) - h(y,0)^*g(y,1) - h(y,1)^*g(y,0) + h(y,1)^*g(y,1) \big)} \\
&= \norm{\sum_{y \in S} \big(h(y,0)^* - h(y,1)^* \big)\big(g(y,0) - g(y,1)
  \big)} \\
&\leq \sqrt{\sum_{y \in S} \norm{h(y,0)^* - h(y,1)^*}^2}\sqrt{\sum_{y \in S}
  \norm{g(y,0) - g(y,1)}^2} \\
&\leq \sqrt{\sum_{y \in \{0,1\}^{n-1}} \norm{h(y,0)^* - h(y,1)^*}^2}\sqrt{
  \sum_{y \in \{0,1\}^{n-1}}\norm{g(y,0) - g(y,1)}^2} \\
&= \sqrt{2^{n-1}I_i(h)}\sqrt{2^{n-1}I_i(g)},
\end{align*}
where the first inequality follows from the Cauchy-Schwarz inequality. Letting
$g = h$ and observing that we picked $i$ and $S$ arbitrarily, this proves the
lemma.
\end{proof}

\begin{lemma}[{\cite[Lemma 4.4]{ben-aroya_2009}}]
For every $\phi \in \mathcal{M}$,
\begin{equation*}
2^{n-1}I(\phi) \leq 2s(n')\norm{\braket{\phi}{\phi}}.
\end{equation*}
\end{lemma}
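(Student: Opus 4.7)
The plan is to expand $\phi$ in the orthogonal basis $\{f_z\}_{z \in \{0,1\}^B}$, reduce $I_i(\phi)$ to a product of two expectations over disjoint substrings, and then bound each factor separately. Write $\phi = \sum_z \alpha_z f_z$. Since the $f_z$ were already shown to be orthogonal and $|f_z(x)| = 2^{-B}$ for every $x$, one immediately gets $\braket{\phi}{\phi} = 2^{n-2B}\sum_z |\alpha_z|^2$, which is the reference quantity we must match on the right-hand side.

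Next, I would fix $i \in [n]$ and locate it as the $j$-th bit of the substring $x_{k,b}$ (for some $k \in [B]$ and $b \in \{0,1\}$). The crucial structural observation is that $f_z(x) = \prod_{k'} f(x_{k',z_{k'}})$ depends on $x_{k,b}$ only when $z_k = b$. Writing $Z_b = \{z : z_k = b\}$, this gives the factorisation
\begin{equation*}
\phi(x) - \phi(x \oplus e_i) \;=\; \bigl(f(x_{k,b}) - f(x_{k,b}\oplus e_j)\bigr)\cdot \sum_{z \in Z_b}\alpha_z \prod_{k' \neq k} f(x_{k',z_{k'}}).
\end{equation*}
The first factor depends only on the substring $x_{k,b}$; the second depends only on substrings $x_{k',b'}$ with $k' \neq k$; and $x_{k,1-b}$ appears in neither. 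Hence $\mathbb{E}_x$ factorises as a product of independent expectations over these disjoint blocks of bits.

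For the first factor, $\mathbb{E}_{x_{k,b}}|f(x_{k,b}) - f(x_{k,b}\oplus e_j)|^2 = I_j(f) \leq s(n')$ by the hypothesis on $f$. For the second factor, expand the modulus squared into a double sum over $z,z' \in Z_b$ and use independence of the substrings together with the fact that $f$ is \emph{balanced}: whenever $z_{k'} \neq z'_{k'}$ for some $k' \neq k$, the block $k'$ contributes $\mathbb{E}_{x_{k',0},x_{k',1}} f(x_{k',0})^* f(x_{k',1}) = 0$, so only the diagonal $z=z'$ survives. Using $\mathbb{E}|f|^2 = 1/4$ on each of the remaining $B-1$ blocks yields
\begin{equation*}
\mathbb{E}_x \Bigl|\sum_{z \in Z_b}\alpha_z\prod_{k'\neq k} f(x_{k',z_{k'}})\Bigr|^2 \;=\; 4^{-(B-1)}\sum_{z \in Z_b}|\alpha_z|^2.
\end{equation*}

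Putting these together gives $I_i(\phi) \leq s(n') \cdot 4^{-(B-1)}\sum_{z \in Z_b}|\alpha_z|^2$. Multiplying by $2^{n-1}$ and comparing with $\braket{\phi}{\phi} = 2^{n-2B}\sum_z|\alpha_z|^2$, the powers of $2$ and $4$ combine to produce exactly a factor of $2$, and bounding $\sum_{z \in Z_b}|\alpha_z|^2 \leq \sum_z|\alpha_z|^2$ finishes the proof after taking the maximum over $i$. The main obstacle is really just the second-factor computation: one must carefully track how the balanced property of $f$ makes all off-diagonal $z \neq z'$ cross terms vanish, because the rest of the argument is elementary once that orthogonality is available.
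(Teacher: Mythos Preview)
Your proposal is correct and follows essentially the same approach as the paper: locate $i$ inside some block $x_{k,b}$, observe that only the $f_z$ with $z_k=b$ contribute to the difference, factor out $f(x_{k,b})-f(x_{k,b}\oplus e_j)$, bound its expected square by $s(n')$, and then use orthogonality of the reduced functions $\prod_{k'\neq k} f(x_{k',z_{k'}})$ (coming from the balancedness of $f$) to evaluate the second factor. The paper packages the second factor as $\dot\phi$ and compares $\braket{\dot f_z}{\dot f_z}$ with $\braket{f_z}{f_z}$ rather than writing $4^{-(B-1)}\sum_{z\in Z_b}|\alpha_z|^2$ directly, but this is only a notational difference.
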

\begin{proof}
Fix some $i \in \{1, \ldots, n\}$ and suppose that $i$ is the $j$'th bit
in $x_{k, b}$. We are given a $\phi \in \mathcal{M}$ and write $\phi = \sum_{z \in
\{0,1\}^B}
\alpha_z f_z$. Note that if we can bound $I_i(\phi)$ for our
arbitrarily chosen $i$, then we can bound $I(\phi)$. We start with
\begin{align*}
I_i(\phi) &= \pex_{x \in \{0,1\}^n} \norm{\phi(x) - \phi(x \oplus
  e_i)}^2 \\
&= \pex_{x \in \{0,1\}^n} \norm{\sum_{z \in \{0,1\}^B}\alpha_z
  \big(f_z(x) - f_z(x \oplus e_i) \big)}^2
\end{align*}
and observe that the $f_z$ for which $f_z(x) = f_z(x \oplus e_i)$ do not
contribute to the sum. Therefore
\begin{align*}
I_i(\phi) &= \pex_{x \in \{0,1\}^n} \norm{\sum_{z: z_k = b} \alpha_z
  \big(f_z(x) - f_z(x \oplus e_i) \big)}^2 \\
&= \pex_{x \in \{0,1\}^n} \norm{\sum_{z: z_k = b} \alpha_z
  \prod_{1 \leq \ell < k} \big(f(x_{\ell, z_\ell})\big)
  \big(f(x_{k,b}) - f(x_{k,b} \oplus e_i)\big) 
  \prod_{k < \ell \leq B} \big(f(x_{\ell, z_\ell}\big)}^2.
\end{align*}
Now observe that the factor $f(x_{k,b}) - f(x_{k,b} \oplus e_i)$ does
not depend on $z$. We write $\dot{x}$ for $x$ without the $k$'th block
($x_{k, 0}$ and $x_{k, 1}$), so $\dot{x} := x_1, \ldots, x_{k-1},
x_{k+1}, \ldots, x_{B}$.
\begin{equation*}
I_i(\phi) = \pex_{x \in \{0,1\}^{n'}} \norm{f(x_{k,b}) - f(x_{k, b} \oplus
e_i)}^2 \cdot \pex_{\dot{x} \in \{0,1\}^{n - 2n'}} \norm{\sum_{z: z_k = b}
\alpha_z \prod_{1 \leq l \leq B \land l \neq k} f(x_{l,z_l})}^2.
\end{equation*}
To improve readability we define $\dot{f}_z = \prod_{1 \leq l \leq B
\land l \neq k} f(x_{l,z_l})$ and $\dot{\phi} = \sum_{z \in \{0,1\}^B}
\alpha_z \dot{f}_z$. Note that $\dot{f}_z$ does not depend on $x_{k,b}$
and neither does $\dot{\phi}$. Also note that by definition $s(n')$ is
an upper bound for the influence of $f$, so we write
\begin{equation*}
I_i(\phi) \leq s(n') \cdot \pex_{\dot{x} \in \{0,1\}^{n - 2n'}} \norm{
\dot{\phi}(\dot{x})}^2.
\end{equation*}
We had previously shown that the $f_z$ are orthogonal, so $\norm{\braket{
\phi}{\phi}}^2 = \sum_{z \in \{0,1\}^B} \norm{\alpha_z}^2 \braket{f_z}{f_z}
$. By an analogous argument the $\dot{f}_z$ are also orthogonal so $\norm{
\braket{\dot{\phi}}{\dot{\phi}}}^2 = \sum_{z: z_k = b} \norm{\alpha_z}^2
\braket{\dot{f}_z}{\dot{f}_z}$. By the uniformity of $f$ over $\{\pm
\frac{1}{2}\}$ it follows that $\braket{f_z}{f_z} = 2^n(\frac{1}{4})^B$
and $\braket{\dot{f}_z}{\dot{f}_z} = 2^{n - 2n'}(\frac{1}{4})^{B-1}$. Together
that tells us that $\braket{f_z}{f_z} = 2^{2n'-2}\braket{\dot{f}_z}{
\dot{f}_z}$. We now compute
\begin{align*}
\pex_{\dot{x} \in \{0,1\}^{n - 2n'}} \norm{\dot{\phi}(\dot{x})}^2 &=
  2^{-(n - 2n')}\norm{\braket{\dot{\phi}}{\dot{\phi}}}^2 \\
&= 2^{-(n - 2n')}\sum_{z:z_k = b}\norm{\alpha_z}^2 \braket{\dot{f}_z}{
  \dot{f}_z} \\
&= 4 \cdot 2^{-n}\sum_{z:z_k = b}\norm{\alpha_z}^2 \braket{f_z}{f_z} \\
&\leq 4 \cdot 2^{-n} \sum_{z \in \{0,1\}^B} \norm{\alpha_z}^2 \braket{
  f_z}{f_z} \\
&= 2 \cdot 2^{-(n-1)} \norm{\braket{\phi}{\phi}}.
\end{align*}
Now it immediately follows that for any $i \in [n]$ we have $2^{n-1}I_i(
\phi) \leq 2s(n')\norm{\braket{\phi}{\phi}}$ and hence $2^{n-1}I(\phi)
\leq 2s(n')\norm{\braket{\phi}{\phi}}$.
\end{proof}

In \cite{coinflip} Ben-Or and Linial define the ``tribes'' function, which splits
its $n'$-bit input into blocks of $\approx \log n' - c \log \log n'$ bits each.
Each bit is considered a Boolean variable and tribes first computes the
conjunction of the variables in each block and then outputs the disjunction of
those conjunctions. The constant $c$ is chosen so that the function becomes
approximately balanced. The influence of tribes is $O\left(\frac{\log n'}{n'}
\right)$, so when we use the tribes function to define $f(x) = 1/2$ if tribes$(x)
= 1$ and $f(x) = -1/2$ if tribes$(x) = 0$, then the theorem implies that for
every $n$ and $B$ such that $2B$ divides $n$ there is an $\llbracket n, B
\rrbracket$ QECC that is $(\mathcal{E}_\textbf{cbit}, O(\frac{B \log(n/B)}{n}))$
immune. In particular when we let $B = \sqrt{n}$, this shows that there
is an $\llbracket n, \sqrt{n} \rrbracket$ QECC that is
$(\mathcal{E}_\textbf{cbit}, O(\frac{\log(\sqrt{n})}{\sqrt{n}}))$ immune.

%
%
\subsection{Controlled phase errors cannot be approximately corrected}
\label{nophase}
We return to the negative results by considering \indef{controlled phase flips}.
For $S \subseteq \{0,1\}^n$ and $\theta \in [0, 2\pi)$ we can define the error
operator $E_{S, \theta}$ by $E_{S, \theta}\ket{x} = e^{\theta i}\ket{x}$ if
$x \in S$ and $E_{S, \theta}\ket{x} = \ket{x}$ otherwise. This lets us define
the set of error operators that are controlled phase errors as
\begin{equation*}
\mathcal{E}_\textbf{cphase} := \{E_{S, \theta} \mid S \subseteq \{0, 1\}^n
\text{ and } \theta \in [0, 2\pi)\}.
\end{equation*}

Note that for every partition $\overline{S} = (S_1,S_2,S_3,S_4)$ of $\{0,1\}^n$
the set $\mathcal{E}_\textbf{cphase}$ contains the operators $E_{S_2 \cup S_4,
-\frac{\pi}{2}}$ and $E_{S_3 \cup S_4, \pi}$. In the following we
will let $E_{\overline{S}}$ stand for $E_{S_2 \cup S_4, -\frac{\pi}{2}}^*E_{S_3
\cup S_4, \pi}$. We now observe that for all $x \in \{0,1\}^n$ we have
\begin{equation*}
E_{\overline{S}}\ket{x} =
\begin{cases}
\ket{x} &\text{if } x \in S_1 \\
e^{\frac{\pi}{2}i}\ket{x} &\text{if } x \in S_2 \\
e^{\pi i}\ket{x} &\text{if } x \in S_3 \\
e^{\frac{3\pi}{2}i}\ket{x} &\text{if } x \in S_4.
\end{cases}
\end{equation*}

We will show that there is no non-trivial QECC that can correct $\mathcal{E
}_\textbf{cphase}$. Non-trivial here means that the QECC must have more than one
codeword. In fact we will prove something stronger, namely that no non-trivial
QECC can \indef{separate} $\mathcal{E}_\textbf{cphase}$ with reasonable error.
We define this as follows.
\begin{definition}
A QECC $\mathcal{M}$ separates $\mathcal{E} \subset L(\mathcal{N},
\mathcal{N}')$ with at most $\alpha$ error if for any two operators $A,B \in
\mathcal{E}$ and any two codewords $\phi, \psi \in \mathcal{M}$,
\begin{equation*}
\braket{\phi}{\psi} = 0 \Rightarrow \norm{\bra{\phi}A^*B\ket{\psi}} \leq
\alpha.
\end{equation*}
\end{definition}
The motivation for this definition is that if a QECC cannot separate a set of
errors, then errors from that set can turn orthogonal states into non-orthogonal
states. That in turn means that they can no longer be perfectly distinguished by
any quantum measurement, and so in particular the error cannot be perfectly
corrected. To prove the theorem we first need a small lemma.

\begin{lemma}[{\cite[Lemma 5.2]{ben-aroya_2009}}]
\label{lemba52}
Let $\mathcal{M}$ be a vector space with $\dim(\mathcal{M}) > 1$. Then there
are two orthonormal vectors $\phi, \psi \in \mathcal{M}$ such that $\sum_x
\norm{\phi(x)}\cdot\norm{\psi(x)} \geq 1/2$.
\end{lemma}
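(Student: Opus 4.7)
Since $\dim(\mathcal{M})>1$, fix two orthonormal vectors $u,v\in\mathcal{M}$. For each real $\theta\in[0,2\pi)$ set
\begin{equation*}
\phi_\theta=\cos\theta\cdot u+\sin\theta\cdot v,\qquad \psi_\theta=-\sin\theta\cdot u+\cos\theta\cdot v;
\end{equation*}
a short check shows $\braket{\phi_\theta}{\psi_\theta}=0$ and $\snorm{\phi_\theta}=\snorm{\psi_\theta}=1$ for every $\theta$. The strategy is to average the quantity $S(\theta):=\sum_x\norm{\phi_\theta(x)}\cdot\norm{\psi_\theta(x)}$ over $\theta$ uniform in $[0,2\pi)$ and show $\pex_\theta S(\theta)\geq 2/\pi>1/2$; since the expectation is at most the maximum, some $\theta_0$ will then satisfy $S(\theta_0)\geq 1/2$, yielding the desired pair $(\phi_{\theta_0},\psi_{\theta_0})$.

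Fix a coordinate $x$ and write $a=u(x),\ b=v(x)\in\mathbb{C}$. Direct expansion gives
\begin{equation*}
\phi_\theta(x)\,\overline{\psi_\theta(x)}=\tfrac{\norm{b}^2-\norm{a}^2}{2}\sin 2\theta+\mathrm{Re}(a\bar b)\cos 2\theta-i\,\mathrm{Im}(a\bar b),
\end{equation*}
so $\norm{\phi_\theta(x)}^2\norm{\psi_\theta(x)}^2=R^2\sin^2(2\theta+\varphi)+s^2$ for suitable real $R,s,\varphi$ depending only on $a,b$. The key algebraic fact is the identity
\begin{equation*}
R^2+s^2=\Bigl(\tfrac{\norm{a}^2-\norm{b}^2}{2}\Bigr)^{\!2}+\norm{a}^2\norm{b}^2=\Bigl(\tfrac{\norm{a}^2+\norm{b}^2}{2}\Bigr)^{\!2},
\end{equation*}
which uses $\mathrm{Re}(a\bar b)^2+\mathrm{Im}(a\bar b)^2=\norm{a}^2\norm{b}^2$.

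With this in hand I would invoke the pointwise inequality $R^2\sin^2\tau+s^2\geq(R^2+s^2)\sin^2\tau$, valid because the difference is $s^2\cos^2\tau\geq 0$. Taking square roots, integrating over $\theta$, and using $\pex_\theta\norm{\sin(2\theta+\varphi)}=2/\pi$,
\begin{equation*}
\pex_\theta\norm{\phi_\theta(x)}\cdot\norm{\psi_\theta(x)}\geq\sqrt{R^2+s^2}\cdot\tfrac{2}{\pi}=\tfrac{\norm{u(x)}^2+\norm{v(x)}^2}{\pi}.
\end{equation*}
Summing over $x$ and using $\snorm{u}=\snorm{v}=1$ gives $\pex_\theta S(\theta)\geq 2/\pi$, and the lemma follows.

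The main obstacle is really the complex case. Over $\mathbb{R}$ one has the transparent identity $\norm{\phi_\theta(x)}\norm{\psi_\theta(x)}=\tfrac{u(x)^2+v(x)^2}{2}\norm{\sin(2\theta-2\alpha_x)}$ and the averaging is immediate. Over $\mathbb{C}$ the imaginary offset $s=\mathrm{Im}(a\bar b)$ keeps $\norm{\phi_\theta(x)\overline{\psi_\theta(x)}}$ bounded away from zero in a way that a crude triangle-type bound like $\sqrt{R^2+s^2}\leq R+\norm{s}$ loses too much, yielding only $\sqrt{2}/\pi<1/2$. The non-obvious step is recognizing that $s$ only \emph{helps}: the bound $R^2\sin^2\tau+s^2\geq(R^2+s^2)\sin^2\tau$ exploits the full Euclidean radius $(\norm{a}^2+\norm{b}^2)/2$ of the ellipse traced out by $\phi_\theta(x)\overline{\psi_\theta(x)}$ and restores the $2/\pi$ rate needed to beat $1/2$.
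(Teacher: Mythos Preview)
Your proof is correct and genuinely different from the paper's. The paper considers only two specific orthonormal pairs, the original $(\phi,\psi)$ and the $\pi/4$-rotation $(\phi',\psi')=\bigl(\tfrac{1}{\sqrt2}(\phi+\psi),\,\tfrac{1}{\sqrt2}(\phi-\psi)\bigr)$, and proves the pointwise inequality
\[
\norm{\phi(x)+\psi(x)}\cdot\norm{\phi(x)-\psi(x)}\;\geq\;\norm{\phi(x)}^2+\norm{\psi(x)}^2-2\norm{\phi(x)}\norm{\psi(x)},
\]
which after summing gives $S(\pi/4)\geq 1-S(0)$, so one of the two values is $\geq 1/2$. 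Your argument instead averages $S(\theta)$ over \emph{all} real rotations and obtains $\pex_\theta S(\theta)\geq 2/\pi$. In effect the paper samples your rotation family at $\theta=0$ and $\theta=\pi/4$ and compares the two values directly via an elementary algebraic inequality, whereas you integrate over the whole circle.

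What each approach buys: the paper's argument is shorter and needs only the reverse triangle inequality $\norm{z^2-w^2}\geq\bigl|\norm{z}^2-\norm{w}^2\bigr|$, with no trigonometry or integration; it yields exactly $1/2$. Your averaging argument is a bit more work but delivers the strictly better constant $2/\pi\approx0.6366$, and the key step---bounding $\sqrt{R^2\sin^2\tau+s^2}$ below by $\sqrt{R^2+s^2}\,\norm{\sin\tau}$---is a nice observation that handles the complex case cleanly. One cosmetic remark: your expansion should read $+\,i\,\mathrm{Im}(a\bar b)$ rather than $-\,i\,\mathrm{Im}(a\bar b)$, but since only $s^2$ enters the estimate this has no effect.
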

\begin{proof}
Let $\phi, \psi \in \mathcal{M}$ be two orthonormal vectors and let $\phi' =
\frac{1}{\sqrt{2}}(\phi + \psi)$ and $\psi' = \frac{1}{\sqrt{2}}(\phi - \psi)$.
Then
\begin{equation*}
\sum_x \norm{\phi'(x)}\cdot\norm{\psi'(x)} = \frac{1}{2} \sum_x \norm{\phi(x) +
\psi(x)}\cdot\norm{\phi(x) - \psi(x)}.
\end{equation*}
Fixing some $x \in \{0,1\}^n$ and assuming without loss of generality that
$\norm{\phi(x)} \geq \norm{\psi(x)}$ we find
\begin{align*}
\norm{\phi(x) + \psi(x)}\cdot\norm{\phi(x) - \psi(x)} &\geq
  \norm{(\phi(x) + \psi(x))(\phi(x) - \psi(x))} \\
&= \norm{\phi(x)^2 - \psi(x)^2} \\
&\geq \norm{\phi(x)}^2 - \norm{\psi(x)}^2 \\
&\geq \norm{\phi(x)}^2 + \norm{\psi(x)}^2 - 2\norm{\phi(x)}\cdot\norm{\psi(x)},
\end{align*}
where the last inequality is because $\norm{\phi(x)} \geq \norm{\psi(x)}$. Now
we can write
\begin{equation*}
\sum_x \norm{\phi(x)'}\cdot\norm{\psi(x)'} \geq \frac{1}{2} \sum_x
(\norm{\phi(x)}^2 + \norm{\psi(x)}^2) - \sum_x \norm{\phi(x)}\cdot\norm{\psi(x)}
= 1 - \sum_x \norm{\phi(x)}\cdot\norm{\psi(x)},
\end{equation*}
so either $\sum_x \norm{\phi(x)}\cdot\norm{\psi(x)}$ or $\sum_x \norm{\phi(x)'}
\cdot\norm{\psi(x)'}$ is at least 1/2. Therefore either $\phi,\psi$ or $\phi',
\psi'$ are our witnesses.
\end{proof}
Now we are ready to prove the theorem.

\begin{theorem}[{\cite[Theorem 5.1]{ben-aroya_2009}}]
There is no QECC with dimension 2 that can separate 
$\mathcal{E}_\textbf{cphase}$ with error $\alpha \leq \frac{1}{10}$.
\end{theorem}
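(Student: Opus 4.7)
The plan is to exhibit, for any two-dimensional QECC $\mathcal{M}$, a partition $\overline{S}=(S_1,S_2,S_3,S_4)$ of $\{0,1\}^n$ and two orthonormal codewords $\phi,\psi\in\mathcal{M}$ witnessing $\norm{\bra{\phi}E_{\overline{S}}\ket{\psi}}>\frac{1}{10}$, which contradicts separation with error at most $\frac{1}{10}$ (using the factorisation $E_{\overline{S}}=E_{S_2\cup S_4,-\pi/2}^{*}E_{S_3\cup S_4,\pi}$ already singled out in the paper, both factors lying in $\mathcal{E}_\textbf{cphase}$).

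First I would invoke Lemma \ref{lemba52} to obtain orthonormal $\phi,\psi\in\mathcal{M}$ with
\begin{equation*}
\sum_{x\in\{0,1\}^n}\norm{\phi(x)}\cdot\norm{\psi(x)}\geq\tfrac{1}{2}.
\end{equation*}
Set $c_x:=\phi(x)^{*}\psi(x)$, so $\norm{c_x}=\norm{\phi(x)}\cdot\norm{\psi(x)}$ and the sum above is $\sum_x\norm{c_x}\geq\frac{1}{2}$. Using the explicit action of $E_{\overline{S}}$ given in the excerpt, for any partition one has
\begin{equation*}
\bra{\phi}E_{\overline{S}}\ket{\psi}=\sum_{k=1}^{4}i^{k-1}\sum_{x\in S_k}c_x.
\end{equation*}

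The core of the proof is to choose the partition greedily. Writing $c_x=\norm{c_x}e^{i\theta_x}$, for each $x$ at least one of the four rotations by $0,\pi/2,\pi,3\pi/2$ brings $\theta_x$ within $\pi/4$ of $0\pmod{2\pi}$; place $x$ in the corresponding $S_k$. Then every term $i^{k-1}c_x$ with $x\in S_k$ satisfies $\mathrm{Re}(i^{k-1}c_x)\geq\cos(\pi/4)\norm{c_x}=\norm{c_x}/\sqrt{2}$. Consequently
\begin{equation*}
\norm{\bra{\phi}E_{\overline{S}}\ket{\psi}}\geq\mathrm{Re}\Bigl(\sum_{k=1}^{4}i^{k-1}\sum_{x\in S_k}c_x\Bigr)\geq\frac{1}{\sqrt{2}}\sum_{x}\norm{c_x}\geq\frac{1}{2\sqrt{2}}.
\end{equation*}

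Since $\frac{1}{2\sqrt{2}}>\frac{1}{10}$, and since $\braket{\phi}{\psi}=0$, this directly contradicts the definition of $\mathcal{M}$ separating $\mathcal{E}_\textbf{cphase}$ with error at most $\frac{1}{10}$. The only step requiring any care is verifying that $E_{\overline{S}}=A^{*}B$ for genuine members $A,B\in\mathcal{E}_\textbf{cphase}$ (noting $E_{S,\theta}^{*}=E_{S,-\theta}$ and $-\pi/2\equiv 3\pi/2\pmod{2\pi}$); everything else is bookkeeping on the quadrant-partition trick, which I regard as the main idea rather than an obstacle.
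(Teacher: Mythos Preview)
Your proof is correct and follows essentially the same strategy as the paper: invoke Lemma~\ref{lemba52}, then choose the partition $\overline{S}$ greedily so that each term $\phi(x)^{*}\psi(x)$ is rotated into the sector of width $\pi/2$ about the positive real axis. Your final estimate via the real part is in fact cleaner than the paper's route through $u_x=1-e^{i\zeta_x}$ and the bound $\norm{u_x}\leq\sqrt{2-\sqrt{2}}$, and it yields the sharper constant $\tfrac{1}{2\sqrt{2}}\approx 0.354$ in place of the paper's $\tfrac{1}{2}\bigl(1-\sqrt{2-\sqrt{2}}\bigr)\approx 0.117$.
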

\begin{proof}
We need to show that for some $A,B \in \mathcal{E}_\textbf{cphase}$ and some
unitary $\phi, \psi \in \mathcal{M}$ we have both $\braket{\phi}{\psi} = 0$ and
$\norm{\bra{\phi}A^*B\ket{\psi}} > \frac{1}{10}$. So it suffices to show that
for some $\overline{S}$ we have $\norm{\bra{\phi}E_{\overline{S}} \ket{\psi}}
> \frac{1}{10}$ for some $\phi,\psi$.

Let $\phi, \psi$ be as in Lemma \ref{lemba52}. We may express
$\phi(x) = r_x e^{\theta_x i}$ and $\psi(x) = r_x' e^{\theta_x' i}$, where
$r_x = \norm{\phi(x)}$ and $r_x' = \norm{\psi(x)}$. Letting $\theta_1 = 0,
\theta_2 = \frac{\pi}{4}, \theta_3 = \frac{\pi}{2}$ and $\theta_4 = \frac{3\pi
}{4}$ we define $\overline{S} = (S_1, S_2, S_3, S_4)$ by putting $x \in S_{j(x)
}$ where $j(x) = \argmin_{j \in [4]} \{\norm{-\theta_x + \theta_x' + \theta_j}
\bmod{2 \pi}\}$. Observe that
\begin{equation*}
\min_{j \in [4]} \{\norm{-\theta_x + \theta_x' + \theta_j} \bmod{2\pi}\} \leq
\frac{\pi}{4},
\end{equation*}
because $\theta_x' - \theta_x$ modulo $2\pi$ is in $[-\pi, \pi)$.

We now let $\zeta_x = -\theta_x + \theta_x' + \theta_j$ (note that $\theta_j$
depends on $x$) and $u_x = 1 - e^{\zeta_x i}$. This lets us write
\begin{equation*}
\norm{\bra{\phi}E_{\overline{S}}\ket{\psi}} = \norm{\sum_{x \in \{0,1\}^n}
r_x r_x' e^{\zeta_x i}} = \norm{\sum_{x \in \{0,1\}^n} r_x r_x'(1 - u_x)}.
\end{equation*}
We now compute
\begin{align*}
\norm{u_x}^2 &= (1 - \cos(\zeta_x))^2 + \sin^2(\zeta_x) \\
&= 1 - 2 \cos(\zeta_x) + \cos^2(\zeta_x) + \sin^2(\zeta_x) \\
&= 2(1 - \cos(\zeta_x)).
\end{align*}
Note that $\zeta_x \in [0, \pi /4]$. Since the cosine is increasing in that
interval, $\cos(\zeta_x)$ is maximal at $\pi / 4$, where its value is $1 /
\sqrt{2}$. Hence $2(1 - \cos(\zeta_x)) \leq 2 - \sqrt{2}$ and $\norm{u_x} \leq
\sqrt{2 - \sqrt{2}}$. This lets us further derive
\begin{align*}
\norm{\sum_{x \in \{0,1\}^n} r_x r_x'(1 - u_x)} &\geq \sum_{x \in \{0,1\}^n}
  r_x r_x' - \norm{\sum_{x \in \{0,1\}^n}r_x r_x' u_x} \\
&\geq \left(1 - \max_{x \in \{0,1\}^n} \norm{u_x}\right)\sum_{x \in \{0,1\}^n}r_x r_x' \\
&\geq \left(1 - \sqrt{2 - \sqrt{2}}\right) \sum_{x \in \{0,1\}^n}r_x r_x'.
\end{align*}
By our choice of $\phi$ and $\psi$ we have that $\sum_{x \in \{0,1\}^n}
r_x r_x' \geq 1/2$, therefore
\begin{equation*}
\norm{\bra{\phi}E_{\overline{S}}\ket{\psi}} \geq \left(1 - \sqrt{2 -
\sqrt{2}}\right)\frac{1}{2} > \frac{1}{10}
\end{equation*}
as desired.
\end{proof}

Observe that the proof shows that we cannot even separate the restriction of
$\mathcal{E}_\textbf{cphase}$ to rotations over $\pi$ and $-\pi/2$, let alone the
arbitrary rotations the unrestricted $\mathcal{E}_\textbf{cphase}$ allows.

%
%
\section{More speculative objections}
\label{longtail}
In this section we present more objections to FTQC, namely a selection of
objections put forth by Kalai in \cite{kalai_detrimental_2008, kalai_propagation,
kalai_fail}. Although these objections are less precise than those presented
in the previous section, Kalai tries hard to identify possible flaws in the theory
of fault-tolerant quantum computing as it exists today. Regardless of how
well-founded these objections will turn out to be, the work of Kalai is very
valuable to help us better understand the nature of noise that impacts quantum
systems and how such noise can be guarded against.

\subsection{Noise propagation}
Most of Kalai's objections deal with the assumptions made about the noise models
for which threshold theorems have been proved. His first such objection deals with
\indef{noise propagation}, the way in which errors occurring at a particular time
during the computation spread across the circuit as the computation proceeds. In
Sections \ref{frame} and \ref{nonmarkov} we assumed that 1-Recs could be
constructed that met conditions 1 through 5 on page \pageref{cond15}, which limit
noise propagation. In particular we assumed that we could limit the accumulation
of noise, by removing all the noise every time we ran our error-detection and
error-correction procedures. Kalai suggests that modeling noise
\emph{propagation} is fundamental to modeling noisy quantum systems and that we
should identify the mathematical properties of noise propagation. In
\cite[Section 6.2]{kalai_propagation} and \cite[Section 6]{kalai_fail} Kalai
proposes such a property and conjectures that fault-tolerant quantum computing
is impossible for noise models having that property.

\subsection{Preparing codewords}
Another objection has to do with our ability to encode quantum states using a
QECC, i.e., to prepare codewords. In our discussion of fault-tolerant quantum
computing we have assumed that a qubit preparation rectangle has at most one
error in its output. In particular, this implies that the state generated by
the rectangle does not contain a superposition of codewords. Rather, it may
be a superposition of a codeword with one or more states that are not
codewords. That allowed us to perform the error-detection and error-correction
steps as outlined in Section \ref{intro}. Naturally, when the output of such a
rectangle is a superposition of codewords, we can neither detect the error nor
correct it. Kalai proposes as Conjecture 1 in \cite{kalai_fail} that the act of
preparing encoded qubits inherently results in a superposition of the intended
codeword and undesirable codewords. 

\subsection{Error synchronization}
Kalai's main objection, however, is based on a physical conjecture. For the
threshold results presented in this paper it was assumed that the spatial and
temporal correlations between errors are either non-existent (Section
\ref{frame}) or highly localized (Section \ref{nonmarkov}). When we consider
the probability distribution of the number $k$ of errors hitting an $n$-qubit
state, a consequence of that assumption becomes that beyond the expected value
for the number of errors, the probability decreases exponentially with $k$. In
other words, the probability distribution of the number of errors has a small
tail. Kalai observes that the QECCs used to prove these threshold results
generate and operate on highly entangled states. Formal definitions can be
given for measures of the entanglement of (mixed) quantum states, but we forgo
giving them here. He makes the physical conjecture that errors hitting such
entangled states will be highly correlated, a phenomenon he calls \indef{error
synchronization}. In \cite[Section 7.2]{kalai_propagation} Kalai makes this
conjecture more precise. The impact of error synchronization on FTQC can
perhaps best be understood by taking a small detour back to the classical
world and considering the effect of error correlation on binary strings. We
will prove a lemma demonstrating that this effect is that we can no longer
assume that our distributions have small tails. This is a generalization of
Lemma 1 in \cite{kalai_detrimental_2008}, which is also referenced as
Proposition 6 in \cite{kalai_propagation}. As such, if Kalai's conjecture
about error synchronization turns out to be true, this could have serious
repercussions for fault-tolerant quantum computing.

In the following we will let $[n]$ stand for the set $\{1, \ldots, n\}$. A
binary string of length $n$ can be seen as an indicator string for errors,
where a $1$ indicates that an error has occurred on that position and a $0$
indicates that no error has occurred. Given a probability distribution
$\mathcal{D}$ on binary strings $x = x_1 \cdots x_n$ of length $n$ and $i,j
\in [n]$, we define the \indef{pairwise correlation} $c_{i j}(\mathcal{D})$
to be $\prob_{x \sim \mathcal{D}}(x_j = 1 \mid x_i = 1)$. This definition
is the author's interpretation of $c_{i j}$ as it is used, but not defined
in \cite{kalai_detrimental_2008}. Note that $c_{ii}(\mathcal{D}) = 1$ for
all $\mathcal{D}$ and $i \in [n]$, but this does not matter as our lemma
will only assume a lower bound on $c_{i j}(\mathcal{D})$. For a binary string
$x$ we let $\weight{x}$ denote the Hamming weight of $x$.

\begin{lemma}
Suppose that $\mathcal{D}$ is a probability distribution on binary strings of
length $n$ and let $s$ be such that for all $i,j \in [n]$, $c_{i j}(\mathcal{D})
\geq s$. For binary strings $y$ and $z$, $y \init z$ means that $y$ is an initial
segment of $z$. Then
\[
\prob_{x \sim \mathcal{D}}(\weight{x} > s n/2) \geq \sum_{i=0}^{n-1}
  \prob(0^i1 \init x) \frac{s/2 - s i/n}{1 - s/2}.
\]
\end{lemma}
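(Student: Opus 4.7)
The plan is to combine the reverse Markov inequality with the pairwise correlation hypothesis through a moment bound on $\weight{x}$. Since $\weight{0^n} = 0 \leq sn/2$, the event $\{x = 0^n\}$ contributes nothing to $\prob(\weight{x} > sn/2)$, while the events $\{0^i 1 \init x\}$ for $i = 0, \ldots, n-1$ partition $\{x \neq 0^n\}$. Setting $p_i = \prob(0^i 1 \init x)$, we therefore have $\prob(\weight{x} > sn/2) = \sum_i p_i \prob(\weight{x} > sn/2 \mid 0^i 1 \init x)$.

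Next, I would apply the reverse Markov inequality to $\weight{x}$ (bounded above by $n$) conditional on each event $\{0^i 1 \init x\}$, multiply by $p_i$, and sum. Using $\sum_i p_i \pex[\weight{x} \mid 0^i 1 \init x] = \pex[\weight{x}]$, one then obtains
\[
\prob(\weight{x} > sn/2) \geq \frac{\pex[\weight{x}] - (sn/2)\sum_i p_i}{n(1 - s/2)}.
\]
Matching this against the RHS of the lemma and simplifying, the proof reduces to the moment inequality $\pex[\weight{x}] \geq s \sum_{i=0}^{n-1} p_i (n - i)$. Writing $\tau$ for the position of the first $1$ and noting that $\sum_i p_i(n-i) = \sum_{j=1}^n \prob(\tau \leq j)$, this is equivalent to $\sum_j \prob(x_j=1) \geq s \sum_j \prob(\tau \leq j)$.

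To obtain this moment bound from the pairwise hypothesis, I would use $\prob(x_j=1, x_k=1) \geq s \prob(x_k=1)$ (which is $c_{jk} \geq s$), sum these bounds over pairs $(j, k)$ with $k \leq j$, and relate the resulting partial sums back to the union probabilities $\prob(\tau \leq j) = \prob(\bigcup_{k \leq j}\{x_k=1\})$. A combinatorial rearrangement then produces the desired inequality on $\pex[\weight{x}]$.

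The hard part will be this last step. A per-coordinate argument cannot succeed: one can easily exhibit distributions satisfying $c_{ij} \geq s$ for which $\prob(x_j = 1) < s \prob(\tau \leq j)$ at specific $j$ (typically at $j = n$, where $\prob(\tau \leq n)$ can be close to $1$ even when each $\prob(x_j = 1)$ is only moderately large). The moment inequality is therefore genuinely global: deficits at large $j$ must be absorbed by surpluses at small $j$, and the pairwise bounds have to be summed with weights that exploit this cancellation (for instance by tracking the first-$1$ position). This cancellation is what is encoded by the negative coefficients $(s/2 - si/n)/(1 - s/2)$ for $i > n/2$ on the RHS of the lemma.
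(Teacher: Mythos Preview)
Your proposal has a genuine gap: you correctly reduce the lemma to the moment inequality $\pex[\weight{x}]\geq s\sum_i p_i(n-i)$, but you do not prove it. The sketch you offer---summing the pairwise bounds $\prob(x_j=1,x_k=1)\geq s\,\prob(x_k=1)$ over $k\leq j$ and then ``combinatorially rearranging'' toward the union probabilities $\prob(\tau\leq j)$---is not carried out, and it is not clear how the rearrangement would go; you yourself label this the hard part and leave it open.

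The paper avoids your global detour by bounding each conditional expectation separately \emph{before} applying reverse Markov. It argues that, conditioned on $0^i1\init x$, position $i{+}1$ contributes $1$ and each of the $n{-}i{-}1$ later positions is $1$ with conditional probability at least $s$, so that $\pex[\weight{x}\mid 0^i1\init x]\geq 1+(n{-}i{-}1)s\geq (n{-}i)s$. Feeding this into the same reverse-Markov step you wrote down---but termwise---yields $\prob(\weight{x}>sn/2\mid 0^i1\init x)\geq (s/2-si/n)/(1-s/2)$ for each $i$, and summing against $p_i$ gives the lemma directly. Your objection to a ``per-coordinate argument'' misfires here: the claim $\prob(x_j=1)\geq s\,\prob(\tau\leq j)$, indexed by the coordinate $j$, can indeed fail, but the paper's per-term bound is indexed by the first-$1$ position $i$ and controls the full conditional weight, not a single coordinate. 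In fact, summing the paper's per-$i$ conditional bound against $p_i$ is exactly your moment inequality, so the ``global cancellation'' you worry about is already built in once one conditions on where the first $1$ occurs. (One caveat worth flagging: the paper's justification of $\prob(x_j=1\mid 0^i1\init x)\geq s$ invokes $c_{(i+1)j}\geq s$, but the conditioning event $\{0^i1\init x\}$ also fixes $x_1=\cdots=x_i=0$, which is not the same as conditioning only on $x_{i+1}=1$; that step is glossed over in the paper and deserves scrutiny.)
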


\begin{proof}
All probabilities are assumed to be according to $\mathcal{D}$, i.e., $x \sim
\mathcal{D}$. We start by observing that for $i = 0, \ldots, n-1$
\begin{equation*}
\pex(\weight{x} \mid 0^i1 \init x) \geq 1 + (n-i-1)s \geq (n-i)s,
\end{equation*}
because the $(i+1)$'st position is 1 and each of the $(n-i-1)$ positions
after that are 1 with probability $\geq s$, because the $(i+1)$'st position
is 1. Furthermore we can upper bound this expected value as
\begin{align*}
\pex(\weight{x} \mid 0^i1 \init x) \leq &\prob(\weight{x} \leq s n/2 \mid 0^i1
\init x)s n/2 + \\
&\prob(\weight{x} > s n/2 \mid 0^i1 \init x)n,
\end{align*}
for $i = 0, \ldots, n-1$. Combining these equations and using the fact that
$\prob(\weight{x} \leq s n/2 \mid 0^i1 \init x) = 1 - \prob(\weight{x} > s n/2
\mid 0^i1 \init x)$
we compute that
\begin{equation*}
\prob(\weight{x} > s n/2 \mid 0^i1 \init x) \geq \frac{s/2 - s i/n}{1 - s/2}.
\end{equation*}
Now we can use this equation together with the observation that
\begin{equation*}
\prob(\weight{x} > s n/2) = \sum_{i=0}^{n-1}\prob(0^i1 \init x) \prob(\weight{x}
> s n/2 \mid 0^i1 \init x)
\end{equation*}
to obtain the result.
\end{proof}

This lemma shows that when errors are correlated, the distribution of the
number of errors hitting a state has a fat tail, i.e., beyond the expected
number of errors the probability decreases only \emph{polynomially} with
the number of errors, not exponentially as before. Writing out the first two
terms of the bound we find that
\begin{equation*}
\prob(\weight{x} > s n/2) \geq \prob(1 \init x) \frac{s/2}{1 - s/2} +
\prob(01 \init x) \frac{s/2 - s/n}{1 - s/2},
\end{equation*}
illustrating that the decrease is indeed polynomial. This violates the
assumption we made in Section \ref{frame} that qubits are hit by errors
\emph{independently}. In Section \ref{nonmarkov} we made no such assumption,
but this result suggests that the noise strength $\eta$ might be too
large, i.e., above the threshold. Therefore if Kalai's conjecture that
highly entangled states lead to error synchronization is correct, then the
current threshold theorems do not apply for realistic noise models and the
possibility of fault-tolerant quantum computing again becomes an open question.

We observe that the bound shown in the lemma is close to optimal, for let
$\mathcal{D}$ be a distribution where $x = 0^n$ with some probability $p \in
(0,1)$ and $x = 1^n$ with probability $1-p$. Then for all $i, j \in [n]$ we
have $c_{i j}(\mathcal{D}) = 1$ and
\begin{equation*}
\prob_{x \sim \mathcal{D}}(\weight{x} > n/2) = \prob(1 \init x)
\frac{1/2}{1 -1/2} = \prob(1 \init x) = p.
\end{equation*}

%
%
\section{Conclusions and outlook}
\label{conclusion} 
The goal of this survey was to give an overview of the current state of FTQC, 
to list important positive and negative results and to show that a large, gray
area remains largely unexplored in between. In Sections \ref{frame} and
\ref{nonmarkov} we presented some important positive results, namely that
thresholds for fault-tolerant quantum computing can be established for a
number of noise models. Although the exact numerical value of these thresholds
is of great practical importance, the differences between proved minimal and
maximal values are still several orders of magnitude. Closing in on exact
numerical values for thresholds under various noise models and using various
QECCs remains an important research goal in the field of fault-tolerant
quantum computing.

The noise models for which we presented threshold results allow for only
very weak spatial and temporal correlations between errors. One direction
forward would thus be to prove that threshold results can be obtained for
more strongly correlated noise models. In fact, any relaxation of the
assumptions we made in Sections \ref{frame} and \ref{nonmarkov} would be
of great value. 

In Section \ref{object} we have seen that when we allow noise models to use
the very entanglement that gives quantum computation its edge over classical
computation against us, we should let go of the idea of perfectly correcting
errors and instead focus on \emph{approximately} correcting them. For controlled
phase errors, however, even that will not be possible. Without allowing
the noise to act conditionally in the sense of Section \ref{object}, we
can still endeavor to obtain threshold results for error distributions that
are not entirely independent. It may be possible to prove that fault-tolerant
quantum computing is possible for correlated errors, when we also consider
a threshold for the \emph{correlation} of errors.

As for obtaining more negative results, the objections put forth by Kalai
deserve further study and formalization. In the end, experiments with
actual noisy quantum systems will likely determine what will be considered
physically \emph{realistic} noise models.

\subsection*{Acknowledgments}
I would like to thank Ronald de Wolf for his guidance and great patience
in helping me understand the topics discussed in this paper. His many
comments and corrections have greatly benefited its style and presentation.
I would also like to thank Dorit Aharonov, Daniel Gottesman and Amnon
Ta-Shma for their insightful replies to Ronald's emails.

%
%
\bibliography{paper}
\end{document}